\newcommand{\nats}{\mathbb{N}}
\renewcommand{\epsilon}{\varepsilon}
\renewcommand{\phi}{\varphi}
\newcommand{\pow}[1]{2^{#1}}
\newcommand{\cceq}{\mathop{::=}}
\newcommand{\set}[1]{\{#1\}}
\newcommand{\F}{\mathop{\mathbf{F}\vphantom{a}}\nolimits}
\newcommand{\G}{\mathop{\mathbf{G}\vphantom{a}}\nolimits}
\DeclareMathOperator{\U}{\mathbf{U}}
\newcommand{\X}{\mathop{\mathbf{X}\vphantom{a}}\nolimits}
\renewcommand{\O}{\mathop{\mathbf{O}\vphantom{a}}\nolimits}
\renewcommand{\H}{\mathop{\mathbf{H}\vphantom{a}}\nolimits}
\DeclareMathOperator{\Since}{\mathbf{S}}
\newcommand{\Y}{\mathop{\mathbf{Y}\vphantom{a}}\nolimits}
\newcommand{\ltl}{{LTL}\xspace}
\newcommand{\pdl}{{PDL}\xspace}
\newcommand{\ctl}{{CTL}\xspace}
\newcommand{\ctlstar}{{CTL$^*$}\xspace}
\newcommand{\hyltl}{{Hyper\-LTL}\xspace}
\newcommand{\hypdl}{{Hyper\-PDL-$\Delta$}\xspace}
\newcommand{\hyqptl}{{Hyper\-QPTL}\xspace}
\newcommand{\sohyltl}{{Hyper$^2$LTL}\xspace}
\newcommand{\hmu}{{$H_\mu$}\xspace}
\newcommand{\hyctlstar}{{HyperCTL$^*$}\xspace}
\newcommand{\qptl}{{QPTL}\xspace}
\newcommand{\hyqptlplus}{{Hyper\-QPTL$^+$}\xspace}
\newcommand{\ahyltl}{{A-HLTL}\xspace}
\newcommand{\hyaut}{HA\xspace}
\newcommand{\foe}{FO$[E,<]$\xspace}
\newcommand{\hyperfo}{HyperFO\xspace}
\newcommand{\sonese}{S1S$[E,<]$\xspace}
\newcommand{\ap}[0]{\mathrm{AP}}
\newcommand{\tower}{\textsc{Tower}\xspace}
\newcommand{\myquot}[1]{``#1''}
\newcommand{\traces}{\mathrm{Tr}}
\newcommand{\hyperize}{{\mathit{hyp}}}
\newcommand{\natsstruct}{(\nats, +, \cdot, <, \in)}
\newcommand{\ghltl}{\textrm{GHyLTL$_\textrm{S+C}$}\xspace}
\newcommand{\hltls}{\textrm{HyperLTL$_\textrm S$}\xspace}
\newcommand{\hltlc}{\textrm{HyperLTL$_\textrm C$}\xspace}
\renewcommand{\phi}{\varphi}
\newcommand{\vars}{\textsf{VAR}}
\newcommand{\tra}{\sigma}
\newcommand{\asg}{\Pi}
\newcommand{\p}{p}
\newcommand{\tr}{x}
\newcommand{\C}[1]{\langle #1 \rangle}
\newcommand{\ctx}{C}
\newcommand{\intdeco}{\texttt{pos}}
\newcommand{\setS}{\Gamma\xspace}
\newcommand{\setL}{\mathcal{L}\xspace}
\newcommand{\pltl}{{PLTL}\xspace}
\newcommand{\dom}[1]{\textrm{Dom}(#1)\xspace}
\newcommand{\TS}{\mathcal{T}\xspace}
\newcommand{\labfunc}{\ell}
\renewcommand{\succ}{\mathrm{succ}}
\newcommand{\pred}{\mathrm{pred}}
\newcommand{\inprop}{\#}
\newcommand{\inpropm}[1]{\#(#1)}
\newcommand{\auxprop}{\$}
\newcommand{\auxpropp}{\$'}
\newcommand{\traux}{x}
\newcommand{\trauxp}{x'}
\newcommand{\adddeco}{\mathrm{add}}
\newcommand{\multdeco}{\mathrm{mult}}
\newcommand{\per}{\mathrm{per}}
\newcommand{\blockchange}{\mathrm{algn}}
\newcommand{\true}{\texttt{true}}
\newcommand{\intprop}{\texttt{\#}}
\newtheorem{theorem}{Theorem}
\newtheorem{corollary}{Corollary}
\newtheorem{remark}{Remark}
\newtheorem{lemma}{Lemma}
\title{The Complexity of Generalized HyperLTL\\ with Stuttering and Contexts}
\author{Gaëtan Regaud\thanks{Supported by the European Union.\newline
\includegraphics[scale=.10]{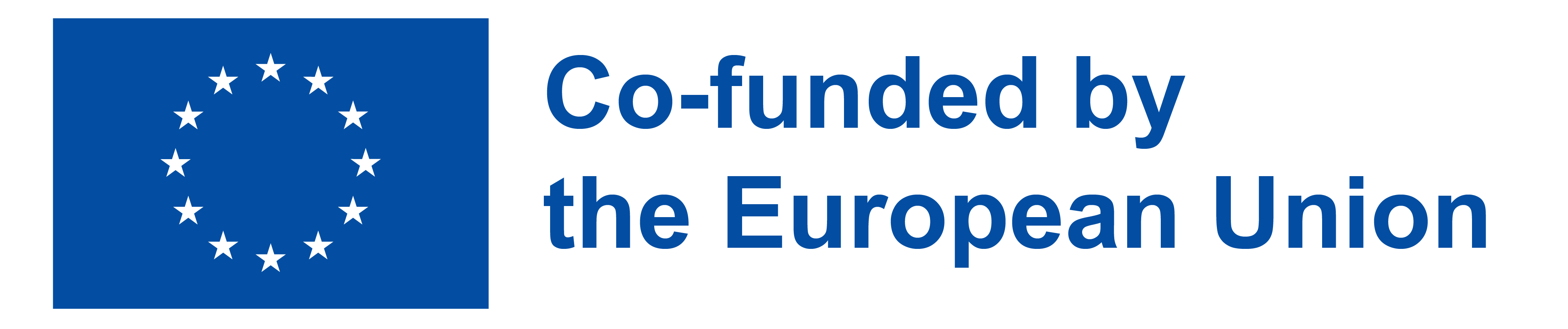}}
\institute{ENS Rennes\\ Rennes, France}
\email{gaetan.regaud@ens-rennes.fr}
\and
Martin Zimmermann\thanks{Supported by DIREC - Digital Research Centre Denmark.}
\institute{Aalborg University\\ Aalborg, Denmark}
\email{mzi@cs.aau.dk}
}
\begin{document}
\maketitle

\begin{abstract}
We settle the complexity of satisfiability and model-checking for generalized HyperLTL with stuttering and contexts, an expressive logic for the specification of asynchronous hyperproperties.
Such properties cannot be specified in HyperLTL, as it is restricted to synchronous hyperproperties.

Nevertheless, we prove that satisfiability is $\Sigma_1^1$-complete and thus not harder than for HyperLTL. 
On the other hand, we prove that model-checking is equivalent to truth in second-order arithmetic, and thus much harder than the decidable HyperLTL model-checking problem. 
The lower bounds for the model-checking problem hold even when only allowing stuttering or only allowing contexts.
\end{abstract}

\section{Introduction}

The introduction of hyperlogics has been an important milestone in the specification, analysis, and verification of hyperproperties~\cite{ClarksonS10}, properties that relate several execution traces of a system. 
These have important applications in, e.g., information-flow security. 
Before their introduction, temporal logics (e.g., \ltl, \ctl, \ctlstar, \qptl and \pdl) were only able to reason about a single trace at a time~\cite{DBLP:journals/eatcs/Finkbeiner17}.
However, this is not sufficient to reason about the complex flow of information.
For example, noninterference~\cite{noninterference} requires that all traces that coincide on their low-security inputs also coincide on their low-security outputs, independently of their high-security inputs (which may differ, but may not leak via low-security~outputs).
This property intuitively \myquot{compares} the inputs and outputs on pairs of traces and is therefore not expressible in LTL, which can only reason about individual traces.

The first generation of hyperlogics has been introduced by equipping \ltl, \ctlstar, \qptl, and \pdl with quantification over traces, obtaining \hyltl~\cite{ClarksonFKMRS14}, \hyctlstar~\cite{ClarksonFKMRS14}, \hyqptl~\cite{FinkbeinerHHT20,Rabe16diss} and \hypdl~\cite{hyperpdl}. They are able to express noninterference (and many other hyperproperties), have intuitive syntax and semantics, and a decidable model-checking problem, making them attractive specification languages for hyperproperties.
For example, noninterference is expressed in \hyltl as
\[
\forall \pi.\ \forall \pi'.\ \left(\bigwedge\nolimits_{i \in I_\ell} \G \left(i_\pi \leftrightarrow i_{\pi'} \right)\right) \rightarrow \left(\bigwedge\nolimits_{o \in O_\ell} \G \left(o_\pi \leftrightarrow o_{\pi'} \right)\right),
\]
where $I_\ell$ is the set of low-security inputs and $O_\ell$ is the set of low-security outputs. 
All these logics are synchronous in the sense that time passes on all quantified traces at the same rate. 

However, not every system is synchronous, e.g., multi-threaded systems in which processes are not scheduled in lockstep. 
The first generation of hyperlogics is not able to express asynchronous hyperproperties. 
Hence, in a second wave, several asynchronous hyperlogics have been introduced, which employ various mechanisms to enable the asynchronous evolution of time on different traces under consideration.
\begin{itemize}
    \item Asynchronous \hyltl (\ahyltl)~\cite{DBLP:conf/cav/BaumeisterCBFS21} adds so-called trajectories to \hyltl, which intuitively specify the rates at which different traces evolve. 
    \item \hyltl with stuttering (\hltls)~\cite{DBLP:conf/lics/BozzelliPS21} changes the semantics of the temporal operators of \hyltl so that time does not evolve synchronously on all traces, but instead evolves based on \ltl-definable stuttering.
    \item \hyltl with contexts (\hltlc)~\cite{DBLP:conf/lics/BozzelliPS21} adds a context-operator to \hyltl, which allows to select a subset of traces on which time passes synchronously, while it is frozen on all others. 
    \item Generalized \hyltl with stuttering and contexts (\ghltl)~\cite{DBLP:conf/fsttcs/BombardelliB0T24} adds both stuttering and contexts to \hyltl and additionally allows trace quantification under the scope of temporal operators, which \hyltl does not allow. 
    \item \hmu~\cite{hmu} adds trace quantification to the linear-time $\mu$-calculus with asynchronous semantics for the modal operators. 
    \item Hypernode automata (\hyaut)~\cite{hypernode} combine automata and hyperlogic with stuttering. 
    \item First- and second-order predicate logic with the equal-level predicate (\foe, \hyperfo, and \sonese)~\cite{FZ17,CoenenFHH19} (evaluated over sets of traces) can also be seen as asynchronous hyperlogics. 
\end{itemize}
The known relations between these logics are depicted in Figure~\ref{figasynchlogics}.

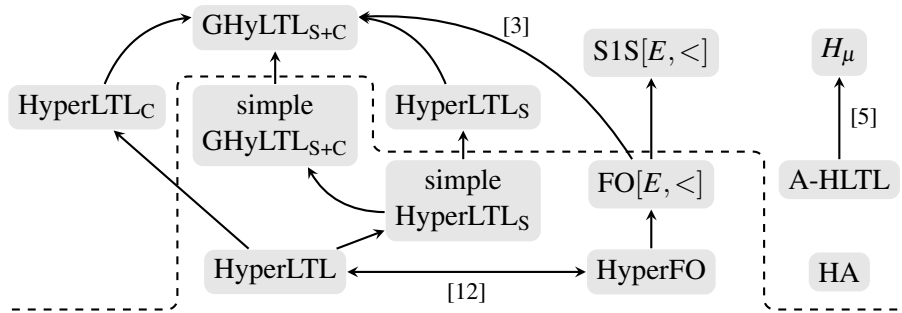
\begin{figure}[b]
    \centering

    \begin{tikzpicture}[thick]
        \node[fill=gray!20,rounded corners] (hyltl) at (0,0.5) {\hyltl};
        \node[fill=gray!20,rounded corners,align=center,anchor = north] (shltls) at (2.5,2) {simple\\ \hltls};
        \node[fill=gray!20,rounded corners,align=center,anchor = north] (hltls) at (2.5,3) {\hltls};
        \node[fill=gray!20,rounded corners,align=center,anchor = north] (hltlc) at (-2.5,3) {\hltlc};
        \node[fill=gray!20,rounded corners,align=center,anchor = north] (sghltl) at (0,3) {simple\\ \ghltl};
        \node[fill=gray!20,rounded corners,align=center] (ghltl) at (0,3.75) {\ghltl};

        \node[fill=gray!20,rounded corners,align=center,anchor = north] (sonese) at (5,3.75) {\sonese};
        \node[fill=gray!20,rounded corners,align=center,anchor = north] (foe) at (5,2) {\foe};
        \node[fill=gray!20,rounded corners,align=center] (hyperfo) at (5,0.5) {\hyperfo};
        
        \node[fill=gray!20,rounded corners,align=center] (hyaut) at (7.5,0.5) {\hyaut};
        \node[fill=gray!20,rounded corners,align=center,anchor = north] (ahyltl) at (7.5,2) {\ahyltl};
        \node[fill=gray!20,rounded corners,align=center,anchor = north] (hmu) at (7.5,3.75) {\hmu};

        \path[->, > = stealth]
        (hyltl) edge node[above] {} (hltlc)
        (hyltl) edge node[above] {} (shltls)
        (shltls) edge node[above] {} (hltls)
        (shltls) edge[bend left] node[above] {} (sghltl) 
        (sghltl) edge node[above] {} (ghltl)
        (hltls) edge[bend right] node[above] {} (ghltl)
        (hltlc) edge[bend left] node[above] {} (ghltl)
        (hyltl) edge[<->] node[below] {\footnotesize \cite{FZ17}} (hyperfo)
        (hyperfo) edge node[above] {} (foe)
        (foe) edge node[above] {} (sonese)
        (ahyltl) edge node[right]{\footnotesize \cite{expressiveness}} (hmu)
        (foe) edge[bend right] node[above] {\footnotesize \cite{DBLP:conf/fsttcs/BombardelliB0T24}} (ghltl)
        ;

        \draw[dashed, rounded corners] (-3.5,0) -- (-1.3,0) -- (-1.3,3.1) -- (1.3, 3.1) -- (1.3, 2.1) -- (6.5,2.1) -- (6.5, 1.1) -- (6.5,0) -- (8.5,0);
        
    \end{tikzpicture}
    
    \caption{The landscape of logics for asynchronous hyperproperties. Arrows denote known inclusions and the dashed line denotes the decidability border for model-checking. For non-inclusions, we refer the reader to work by Bozelli et al.~\cite{expressiveness,DBLP:conf/fsttcs/BombardelliB0T24}.}
    \label{figasynchlogics}
\end{figure}

However, all these logics have an undecidable model-checking problem, thereby losing one of the key features of the first generation logics. 
Thus, much research focus has been put on fragments of these logics, e.g., simple \ghltl and simple \hltls, which both have a decidable model-checking problem. The same is true for fragments of \ahyltl~\cite{DBLP:conf/cav/BaumeisterCBFS21}, \hmu~\cite{hmu}, and \hyaut~\cite{hypernode}.
Furthermore, for almost all of the logics, the satisfiability problem has never been studied. 
Thus, the landscape of complexity results for the second generation is still incomplete, while the complexity of satisfiability and model-checking for the first generation has been settled (see Table~\ref{tablefirstgenresults}).

\begin{table}
    \setlength{\tabcolsep}{15pt}
    \centering
    \caption{List of complexity results for synchronous hyperlogics. \myquot{T3A-equivalent} stands for \myquot{equivalent to truth in third-order arithmetic}. The result for \hypdl satisfiability can be shown using techniques developed  by Fortin et al.\ for \hyltl satisfiability~\cite{hyperltlsat}.} 
    \renewcommand{\arraystretch}{1}
    \begin{tabular}{lll}
    
       Logic &  Satisfiability    &  Model-checking \\
         \midrule
    
        \hyltl & $\Sigma_1^1$-complete~\cite{hyperltlsat}  & \tower-complete~\cite{Rabe16diss,MZ20} \\
        
        \rowcolor{lightgray!40} \hypdl & $\Sigma_1^1$-complete & $\tower$-complete~\cite{hyperpdl}\\
        
        \hyqptl &  $\Sigma_1^2$-complete~\cite{hyq} & $\tower$-complete~\cite{Rabe16diss}\\
        
        \rowcolor{lightgray!40}\hyqptlplus & T3A-equivalent~\cite{hyq} &   T3A-equivalent~\cite{hyq}\\
        
        \sohyltl & T3A-equivalent~\cite{fz25}  &  T3A-equivalent~\cite{fz25}\\
        
        \rowcolor{lightgray!40}\hyctlstar & $\Sigma_1^2$-complete~\cite{hyperltlsat}  & \tower-complete~\cite{Rabe16diss,MZ20} \\
    
    \end{tabular}
    \label{tablefirstgenresults}
\end{table}

In these preceding works, and here, one uses the complexity of arithmetic, predicate logic over the signature~$(+, \cdot, <)$, as a yardstick.
In first-order arithmetic, quantification ranges over natural numbers while second-order arithmetic adds quantification over sets of natural numbers and third-order arithmetic adds quantification over sets of sets of natural numbers.
Figure~\ref{fighierarchies} gives an overview of the arithmetic, analytic, and \myquot{third} hierarchy, each spanned by the classes of languages definable by restricting the number of alternations of the highest-order quantifiers, i.e., $\Sigma_n^0$ contains languages definable by formulas of first-order arithmetic with $n-1$ quantifier alternations, starting with an existential one.

\begin{figure}[b]
    \centering
    
  \scalebox{.9}{
  \begin{tikzpicture}[xscale=1.0,yscale=.7,thick]

    \fill[fill = gray!25, rounded corners] (-1.05,2) rectangle (0.5,-1.5);
    \fill[fill = gray!25, rounded corners] (.6,2) rectangle (14.5,-1.5);

    \node[align=center] (s00) at (0,0) {$\Sigma^0_0$ \\ $=$ \\ $\Pi^0_0$} ;
    \node (s01) at (1,1) {$\Sigma^0_1$} ;
    \node (p01) at (1,-1) {$\Pi^0_1$} ;
    \node (s02) at (2,1) {$\Sigma^0_2$} ;
    \node (p02) at (2,-1) {$\Pi^0_2$} ;
    \node (s03) at (3,1) {$\Sigma^0_3$} ;
    \node (p03) at (3,-1) {$\Pi^0_3$} ;
    \node (s04) at (4,1) {$\cdots$} ;
    \node (p04) at (4,-1) {$\cdots$} ;
    
    \node[align=center] (s10) at (5,0) {$\Sigma^1_0 $ \\ $=$ \\ $ \Pi^1_0$} ;
    \node (s11) at (6,1) {$\Sigma^1_1$} ;
    \node (p11) at (6,-1) {$\Pi^1_1$} ;
    \node (s12) at (7,1) {$\Sigma^1_2$} ;
    \node (p12) at (7,-1) {$\Pi^1_2$} ;
    \node (s13) at (8,1) {$\Sigma^1_3$} ;
    \node (p13) at (8,-1) {$\Pi^1_3$} ;
    \node (s14) at (9,1) {$\cdots$} ;
    \node (p14) at (9,-1) {$\cdots$} ;
    
    \node[align=center] (s20) at (10,0) {$\Sigma^2_0$ \\ $ =$ \\ $ \Pi^2_0$} ;
    \node (s21) at (11,1) {$\Sigma^2_1$} ;
    \node (p21) at (11,-1) {$\Pi^2_1$} ;
    \node (s22) at (12,1) {$\Sigma^2_2$} ;
    \node (p22) at (12,-1) {$\Pi^2_2$} ;
    \node (s23) at (13,1) {$\Sigma^2_3$} ;
    \node (p23) at (13,-1) {$\Pi^2_3$} ;
    \node (s24) at (14,1) {$\cdots$} ;
    \node (p24) at (14,-1) {$\cdots$} ;
    
    \foreach \i in {0,1,2} {
      \draw (s\i0) -- (s\i1) ;
      \draw (s\i0) -- (p\i1) ;
      \draw (s\i1) -- (s\i2) ;
      \draw (s\i1) -- (p\i2) ;
      \draw (p\i1) -- (s\i2) ;
      \draw (p\i1) -- (p\i2) ;
      \draw (s\i2) -- (s\i3) ;
      \draw (s\i2) -- (p\i3) ;
      \draw (p\i2) -- (s\i3) ;
      \draw (p\i2) -- (p\i3) ;
      \draw (s\i3) -- (s\i4) ;
      \draw (s\i3) -- (p\i4) ;
      \draw (p\i3) -- (s\i4) ;
      \draw (p\i3) -- (p\i4) ;
    }
    \foreach \i [evaluate=\i as \iplus using int(\i+1)] in {0,1} {
      \draw (s\i4) -- (s\iplus0) ;
      \draw (p\i4) -- (s\iplus0) ;
}
      \node[] at (-0.25,1.7) {\scriptsize Decidable} ;
      \node at (13.4,1.7) {\scriptsize Undecidable} ;


      \node[align=left,font=\small,
      rounded corners=2pt] at (3.2,1.7)
      (re) {\scriptsize Recursively enumerable} ;
      \draw[-stealth,rounded corners=2pt] (s01) |- (re) ;

    \path (4.25,-1.65) edge[decorate,decoration={brace,amplitude=3pt}]
    node[below] {\begin{minipage}{4cm}\centering
    	arithmetical hierarchy $ $\\ $\equiv$\\ first-order arithmetic 
    \end{minipage}} (.75,-1.65) ;

    \path (9.25,-1.65) edge[decorate,decoration={brace,amplitude=3pt}]
    node[below] {\begin{minipage}{4.4cm}\centering
    	analytical hierarchy $ $\\$\equiv$\\ second-order arithmetic 
    \end{minipage}} (5,-1.65) ;

    \path (14.25,-1.65) edge[decorate,decoration={brace,amplitude=3pt}]
    node[below] {\begin{minipage}{4cm}\centering
    	\myquot{the third hierarchy} $ $\\ {$\equiv$}\\ third-order arithmetic 
    \end{minipage}} (10,-1.65) ;

  \end{tikzpicture}}
    
    \caption{The arithmetical hierarchy, the analytical hierarchy, and beyond.}
    \label{fighierarchies}
\end{figure}
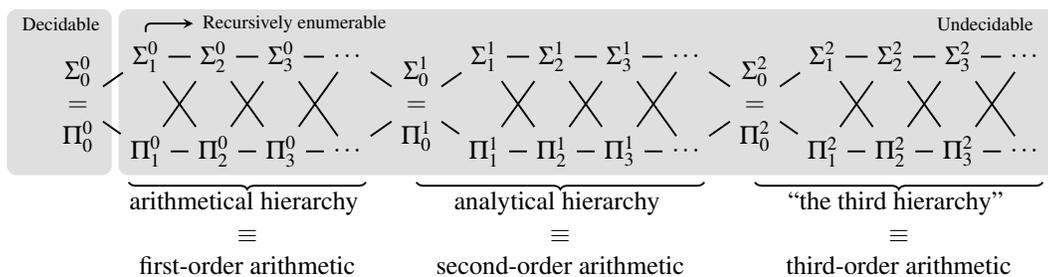

Our goal is to obtain a similarly clear picture for asynchronous logics, both for model-checking (for which, as mentioned above, only some lower bounds are known) and for satisfiability (for which almost nothing is known).
In this work, we focus on \ghltl, as it is one of the most expressive logics and subsumes many of the other logics.

First, we study the satisfiability problem. 
It is known that \hyltl satisfiability is $\Sigma_1^1$-complete.
Here, we show that satisfiability for \ghltl is not harder, i.e., also $\Sigma_1^1$-complete.
The lower bound is trivial, as \hyltl is a fragment of \ghltl. 
However, we show that adding stuttering, contexts, and quantification under the scope of temporal operators all do not increase the complexity of satisfiability.
Intuitively, the underlying reason is that \ghltl is a first-order linear-time logic, i.e., it is evaluated over a set of traces and Skolem functions for the existentially quantified variables map tuples of traces to traces.
We exploit this property to show that every satisfiable formula has a countable model. 
The existence of such a \myquot{small} model can be captured in $\Sigma_1^1$.
This should be contrasted with \hyctlstar, which only adds quantification under the scope of temporal operators to \hyltl, but with a branching-time semantics. 
In \hyctlstar, one can write formulas that have only uncountable models, which in turn allows one to encode existential third-order quantification~\cite{hyperltlsat}.
Consequently, \hyctlstar satisfiability is $\Sigma_1^2$-hard (and in fact $\Sigma_1^2$-complete) and thus much harder than that of \ghltl.
Let us also mention that these results settle the complexity of \foe satisfiability: it is $\Sigma_1^1$-complete as well. Here, the lower bound is inherited from \hyltl and the upper bound follows from the fact that \foe can be translated into \ghltl.

Then, we turn our attention to the model-checking problem, which we show to be equivalent to truth in second-order arithmetic and therefore much harder than satisfiability. 
Here, we show that, surprisingly, the lower bounds already hold for the fragments~\hltls and \hltlc, i.e., adding one feature is sufficient, and adding the second does not increase the complexity further. 
This result also has to be contrasted with \hyltl model-checking: adding stuttering or contexts takes the model-checking problem from \tower-complete~\cite{FinkbeinerRS15} (and thus decidable) to truth in second-order arithmetic. 

The intuitive reason for model-checking being much harder than satisfiability is that every satisfiable formula of \ghltl has a countable model while in the model-checking problem, one has to deal with possibly uncountable models, as (finite) transition systems may have uncountably many traces.
This allows us to encode second-order arithmetic in \hltls  and in \hltlc. 
A similar situation occurs for a fragment of second-order \hyltl~\cite{fz25}, but in general satisfiability is harder than or as hard as model-checking (see Table~\ref{tablefirstgenresults}).

All proofs omitted due to space restrictions can be found in the full version~\cite{fullversion}.

\section{Preliminaries}

The set of nonnegative integers is denoted by $\nats$. 
An alphabet is a nonempty finite set~$\Sigma$.
The set of infinite words over $\Sigma$ is $\Sigma^\omega$. Given $w \in \Sigma^\omega$ and $i \in\nats$, $w(i)$ denotes the $i$-th letter of $w$ (starting with $i =0$).
Let $\ap$ be a fixed finite set of propositions. A trace~$\tra$ is an element of $(\pow\ap)^\omega$, and a pointed trace is a pair $(\tra,i)$ consisting of a trace and a pointer~$i \in\nats$ pointing to a position of $\tra$.
It is initial if $i=0$.

A transition system is a tuple~$\TS=(V,E,I,\labfunc)$ where $V$ is a nonempty finite set of vertices, $E\subseteq V\times V$ is a set of directed edges, $I\subseteq V$ is a set of initial vertices, and $\labfunc: V\to \pow\ap$ is a labeling function that maps each vertex to a set of propositions. We require that each vertex has at least one outgoing edge.
A run of a transition system~$\TS$ is an infinite word~$v_0v_1\cdots\in V^\omega$ such that $v_0\in I$ and $(v_i,v_{i+1})\in E$ for all $i\in\nats$. The set~$\traces(\TS)=\{\labfunc(v_0)\labfunc(v_1)\cdots\mid v_0v_1\cdots\text{is a run of $\TS$}\}$ is the set of traces induced by $\TS$.

\textbf{LTL with Past.}
The logic~\pltl~\cite{Pnueli77} extends classical \ltl~\cite{Pnueli77} by adding temporal operators to describe past events. The syntax of \pltl is defined as 
\[\theta \cceq \p \mid \lnot \theta \mid \theta \lor \theta \mid \X \theta \mid \theta \U \theta \mid \Y \theta \mid \theta \Since \theta,\]
where $\p\in\ap$. Here, $\Y$ (yesterday) and $\Since$ (since) are the past-variants of $\X$ (next) and $\U$ (until).
We use the usual syntactic sugar, e.g., $\land$, $\rightarrow$, $\leftrightarrow$, $\F$ (eventually), $\G$ (always), $\O$ (once, the past-variant of eventually), and $\H$ (historically, the past-variant of always).

The semantics of \pltl is defined over pointed traces $(\tra,i)$ as 
\begin{itemize}
    \item $(\tra,i)\models \p $ if $ \p\in\tra(i)$,
    \item $(\tra,i)\models \lnot\theta $ if $ (\tra,i)\not\models\theta$,
    \item $(\tra,i)\models \theta_1\lor\theta_2 $ if $ (\tra,i)\models\theta_1 $ or $ (\tra,i)\models\theta_2$,
    \item $(\tra,i)\models \X\theta $ if $ (\tra,i+1)\models\theta$,
    \item $(\tra,i)\models \theta_1\U\theta_2 $ if there exists an $ i'\geq i $ such that $(\tra,i')\models\theta_2 $ and $(\tra,j)\models\theta_1$ for all $ i\le j < i'$,
    \item $(\tra,i)\models \Y\theta $ if $ i>0 $ and $(\tra,i-1)\models\theta$, and
    \item $(\tra,i)\models \theta_1\Since\theta_2 $ if there exits an $0 \le i'\leq i $ such that $(\tra,i')\models\theta_2 $ and $(\tra,j)\models\theta_1$ for all $ i'< j\leq i$.
\end{itemize}

\textbf{Stuttering.}
Let $\setS$ be a finite set of \pltl formulas and $\tra$ a trace. 
We say that $i \in \nats$ is a proper $\setS$-changepoint of $\tra$ if either $i = 0$ or  $i >0$ and there is a $\theta \in \setS$ such that $(\tra,i) \models \theta$ if and only if $(\tra,i-1)\not\models\theta$, i.e., the truth value of $\theta$ at positions $i$ and $i-1$ differs.    
If $\tra$ has only finitely many proper $\setS$-changepoints (say $i$ is the largest one), then $i+1, i+2, \ldots$ are $\setS$-changepoints of $\tra$ by convention.
Thus, every trace has infinitely many $\setS$-changepoints.

The $\setS$-successor of a pointed trace~$(\tra,i)$ is the pointed trace~$\succ_\setS(\tra, i) = (\tra,i')$ where $i'$ is the minimal $\setS$-changepoint of $\tra$ that is strictly greater than $i$.
Dually, the $\setS$-predecessor of $(\tra,i)$ for $i >0$ is the pointed trace~$\pred_\setS(\tra,i) = (\tra,i')$ where $i'$ is the maximal $\setS$-changepoint of $\tra$ that is strictly smaller than $i$; $\pred_\setS(\tra,0)$ is undefined.

\begin{remark}
\label{remstutteringspecialcases}
Let $\tra$ be a trace over some set~$\ap'$ of propositions and let $\setS$ only contain \pltl formulas using propositions in $\ap''$ such that $\ap'\cap \ap'' = \emptyset$ (note that this is in particular satisfied, if $\setS=\emptyset$).
Then, $0$ is the only proper $\setS$-changepoint of $\tra$.
Hence, by our convention, every position of $\tra$ is a $\setS$-changepoint which implies $\succ_\setS(\tra, i) = (\tra,i+1)$ for all $i$ and $\pred_\setS(\tra, i) = (\tra,i-1)$ for all $i>0$.
\end{remark}

\textbf{Generalized HyperLTL with Stuttering and Contexts.}
Recall that \hyltl extends \ltl with trace quantification in prenex normal form, i.e., first some traces are quantified and then an \ltl formula is evaluated (synchronously) over these traces.
\ghltl extends \hyltl by two new constructs to express asynchronous hyperproperties:
\begin{itemize}
    \item Contexts allow one to restrict the set of quantified traces over which time passes when evaluating a formula, e.g., $\C \ctx \psi$ for a nonempty finite set~$\ctx$ of trace variables expresses that $\psi$ holds when time passes synchronously on the traces bound to variables in $\ctx$, but time does not pass on variables bound to variables that are not in $\ctx$.
    \item Furthermore, temporal operators are labeled by sets~$\setS$ of \pltl formulas and time stutters w.r.t.\ $\setS$, e.g., $\X_\setS$ stutters to the $\setS$-successor on each trace in the current context.
\end{itemize}
Also, unlike \hyltl, \ghltl allows trace quantification under the scope of temporal operators. 

Fix a finite set~$\vars$ of trace variables. The syntax of \ghltl is given by the grammar
\[\phi\cceq \p_\tr \mid \lnot \phi\mid\phi\lor\phi\mid\C\ctx\phi\mid\X_\setS\phi\mid\phi\U_\setS\phi\mid\Y_\setS\phi\mid\phi\Since_\setS\phi\mid\exists\tr.\phi\mid\forall\tr.\phi,\]
where $\p\in\ap$, $\tr\in\vars$, $\ctx \in\pow\vars\setminus\set{\emptyset}$, and $\setS$ ranges over finite sets of \pltl formulas.
A sentence is a formula without free trace variables, which are defined as expected.
To declutter our notation, we write $\C{\tr_1, \ldots, \tr_n}$ for contexts instead of $\C{\set{\tr_1, \ldots, \tr_n}}$.

To define the semantics of \ghltl we need to introduce some notation.
A (pointed) trace assignment~$\asg\colon\vars\to (\pow\ap)^\omega\times\nats$ is a partial function that maps trace variables to pointed traces.
The domain of a trace assignment~$\asg$, written as $\dom{\asg}$, is the set of variables for which $\asg$ is defined.
For $\tr\in\vars$, $\tra\in(\pow\ap)^\omega$, and $i\in\nats$, the assignment~$\asg[\tr\mapsto(\tra,i)]$ maps $\tr$ to $(\tra,i)$ and each other~$\tr' \in\dom{\asg} \setminus\set{\tr}$ to $\asg(\tr')$.

Fix a set~$\setS$ of \pltl formulas and a context~$\ctx \subseteq \vars$. 
The $(\setS, \ctx)$-successor and the $(\setS, \ctx)$-pre\-de\-ces\-sor of a trace assignment~$\asg$ are the
trace assignments~$\succ_{(\setS, \ctx)}(\asg)$ and $\pred_{(\setS, \ctx)}(\asg(\tr))$ defined as
\[
\succ_{(\setS, \ctx)}(\asg)(\tr) = \begin{cases}
    \succ_\setS(\asg(\tr)) & \text{if $\tr \in \ctx$},\\
    \asg(\tr) &\text{otherwise,}
\end{cases}
\text{ and }
\pred_{(\setS, \ctx)}(\asg)(\tr) = \begin{cases}
    \pred_\setS(\asg(\tr)) & \text{if $\tr \in \ctx$},\\
    \asg(\tr) &\text{otherwise.}
\end{cases}
\]
The $(\setS, \ctx)$-predecessor of $\asg$ is only defined when $\pred_\setS(\asg(\tr))$ is defined for all $\tr \in \ctx$.\footnote{Note that this definition differs from the one in the original paper introducing \ghltl~\cite{DBLP:conf/fsttcs/BombardelliB0T24}, which required that $\pred_\setS(\asg(\tr))$ is defined for every $\tr \in \dom{\asg}$. However, this is too restrictive, as the predecessor operation is only applied to traces~$\asg(\tr)$ with $\tr \in \ctx$. Furthermore, it leads to undesirable side effects, e.g., $\setL \models \phi$ may hold, but $\setL \models \forall \tr. \phi$ does not hold, where $\tr$ is a variable not occurring in $\phi$.}

Iterated $(\setS, \ctx)$-successors and $(\setS, \ctx)$-predecessors are defined as
\begin{itemize}
    \item $\succ^0_{(\setS, \ctx)}(\asg) = \asg$ and $\succ^{j+1}_{(\setS, \ctx)}(\asg) = \succ_{(\setS, \ctx)}(\succ^j_{(\setS, \ctx)}(\asg) )$ and
    \item $\pred^0_{(\setS, \ctx)}(\asg) = \asg$ and $\pred^{j+1}_{(\setS, \ctx)}(\asg) = \pred_{(\setS, \ctx)}(\pred^j_{(\setS, \ctx)}(\asg) )$, which may again be undefined.
\end{itemize}

Now, the semantics of \ghltl is defined with respect to a set~$\setL$  of traces, an assignment~$\asg$, and a context~$\ctx \subseteq \vars$ as
\begin{itemize}

    \item $(\setL, \asg,\ctx)\models\p_\tr$ if $\asg(\tr)=(\tra,i)$ and $ \p\in\tra(i)$,
    
    \item $(\setL, \asg,\ctx)\models\lnot\phi$ if $(\setL, \asg,\ctx)\not\models\phi$,
    
    \item $(\setL, \asg,\ctx)\models\phi_1\lor\phi_2$ if $(\setL, \asg,\ctx)\models\phi_1$ or $(\setL, \asg,\ctx)\models\phi_2$,
    
    \item $(\setL, \asg,\ctx)\models\C{\ctx'}\phi$ if $(\setL, \asg,\ctx')\models\phi$,
    
    \item $(\setL, \asg,\ctx)\models\X_\setS\phi$ if $(\setL, \succ_{(\setS,\ctx)}(\asg),\ctx)\models\phi$,
    
    \item $(\setL, \asg,\ctx)\models\phi_1\U_\setS\phi_2$  if there exists an $i\geq 0$ such that $(\setL, \succ^i_{(\setS,\ctx)}(\asg),\ctx)\models \phi_2$ and
    
    $(\setL, \succ^j_{(\setS,\ctx)}(\asg),\ctx)\models\phi_1$ for all $0\leq j<i$,

    \item $(\setL, \asg,\ctx)\models\Y_\setS\phi$ if $\pred_{(\setS,\ctx)}(\asg)$ is defined and $(\setL, \pred_{(\setS,\ctx)}(\asg),\ctx)\models\phi$,
    
    \item $(\setL, \asg,\ctx)\models\phi_1\Since_\setS\phi_2 $ if there exists an $i\geq 0$ such that $\pred^i_{(\setS,\ctx)}(\asg)$ is defined,\newline $(\setL, \pred^i_{(\setS,\ctx)}(\asg),\ctx)\models \phi_2$, and $(\setL, \pred^j_{(\setS,\ctx)}(\asg),\ctx)\models \phi_1$ for all $0\leq j<i$,

    \item $(\setL, \asg,\ctx)\models\exists\tr.\phi$ if there exists a trace~$\tra\in\setL$ such that $(\setL, \asg[\tr\mapsto(\tra,0)],\ctx)\models\phi$, and 
    
    \item $(\setL, \asg,\ctx)\models\forall\tr.\phi$ if for all traces~$\tra\in\setL$ we have $(\setL, \asg[\tr\mapsto(\tra,0)],\ctx)\models\phi$.

\end{itemize}
Note that quantification ranges over \emph{initial} pointed traces, even under the scope of a temporal~operator.

We say that a set~$\setL$ of traces satisfies a sentence~$\phi$, written~$\setL \models \phi$, if $(\setL, \emptyset,\vars)\models\phi$, where $\emptyset$ represents the variable assignment with empty domain.
Furthermore, a transition system~$\TS$ satisfies $\phi$, written $\TS \models \phi$, if $\traces(\TS) \models \phi$.

\begin{remark}
\label{remarkquantfreesemantics}
Let $\asg$ be an assignment, $\ctx$ a context, and $\phi$ a quantifier-free \ghltl formula. Then, we have $(\setL, \asg, \ctx) \models \phi$ if and only if $(\setL', \asg, \ctx) \models \phi$ for all sets~$\setL,\setL'$ of traces, i.e., satisfaction of quantifier-free formulas is independent of the set of traces, only the assignment~$\asg$ and the context~$\ctx$ matter.
Hence, we will often write $(\asg,\ctx) \models\phi$ for quantifier-free $\phi$.
\end{remark}

\hyltl~\cite{ClarksonFKMRS14}, \hltls~\cite{DBLP:conf/lics/BozzelliPS21} (HyperLTL with stuttering), and \hltlc~\cite{DBLP:conf/lics/BozzelliPS21} (HyperLTL with contexts) are syntactic fragments of \ghltl.
Let us say that a formula is past-free, if it does not use the temporal operators~$\Y$ and $\Since$, Then,
\begin{itemize}
    \item \hyltl is the fragment obtained by considering only past-free \ghltl formulas in prenex normal form, by disallowing the context operator~$\C\cdot$, and by indexing all temporal operators by the empty set,
    \item \hltls is the fragment obtained by considering only past-free \ghltl formulas in prenex normal form, by disallowing the context operator~$\C\cdot$, and by indexing all temporal operators by sets of past-free \pltl formulas, and 
    \item \hltlc is the fragment obtained by considering only past-free \ghltl formulas in prenex normal form and by indexing all temporal operators by the empty set.
\end{itemize}

\textbf{Arithmetic and Complexity Classes for Undecidable Problems.} 
To capture the complexity of undecidable problems, we consider formulas of arithmetic, i.e., predicate logic with signature~$(+, \cdot, <, \in)$, evaluated over the structure~$\natsstruct$. 
A type~$0$ object is a natural number in $\nats$, and a type~$1$ object is a subset of $\nats$.
In the following, we use lower-case roman letters (possibly with decorations) for first-order variables, and upper-case roman letters (possibly with decorations) for second-order variables.
Every fixed natural number is definable in first-order arithmetic, so we freely use them as syntactic sugar. For more detailed definitions, we refer to \cite{Rogers87}.

Our benchmark is second-order arithmetic, i.e., predicate logic with quantification over type~$0$ and type~$1$ objects. 
Arithmetic formulas with a single free first-order variable define sets of natural numbers. In particular, $\Sigma_1^1$ contains the sets of the form~$\set{x \in\nats \mid \exists X_1 \subseteq \nats.\ \cdots \exists X_k\subseteq \nats.\ \psi(x, X_1, \ldots,X_k )}$, where $\psi$ is a formula of arithmetic with arbitrary quantification over type~$0$ objects (but no second-order quantifiers). 
Furthermore, truth in second-order arithmetic is the following problem: Given a sentence~$\phi$ of second-order arithmetic, do we have $\natsstruct\models\phi$?

\section{``Small'' Models for \texorpdfstring{\ghltl}{Generalized HyperLTL with Stuttering and Contexts}}

In this section, we prove that every satisfiable \ghltl sentence has a countable model, which is an important stepping stone for determining the complexity of the satisfiability problem in Section~\ref{secsat}. 
To do so, we first prove that for every \ghltl sentence~$\phi$ there is a $\ghltl$ sentence $\phi_p$ in prenex normal form that is \myquot{almost} equivalent in the following sense:
A set~$\setL$ of traces is a model of $\phi$ if and only if $\setL \cup \setL_\intdeco$ is a model of $\phi_p$, where $\setL_\intdeco$ is a countable set of traces that is independent of $\phi$.

Before we formally state our result, let us illustrate the obstacle we have to overcome, which traces are in $\setL_\intdeco$, and how they help to overcome the obstacle.
For the sake of simplicity, we use an always formula as example, even though it is syntactic sugar: The same obstacle occurs for the until operator, but there we would have to deal with the two subformulas of $\psi_1 \U_\setS \psi_2$ instead of the single one of $\G_\setS \psi$.

In a formula of the form~$\exists \tr. \G_\setS \exists \tr'.\ \psi$, the always operator acts like a quantifier too, i.e., the formula expresses that there is a trace~$\tra$ such that for \emph{every} position~$i$ on $\tra$, there is another trace~$\tra'$ (that may depend on $i$) so that $([\tr\mapsto (\tra,i), \tr'\mapsto(\tra',0)],\ctx)$ satisfies $\psi$, where $\ctx$ is the current context.
Obviously, moving the quantification of $\tr'$ before the always operator does not yield an equivalent formula, as $\tr'$ then no longer depends on $i$.
Instead, we simulate the implicit quantification over positions~$i$ by explicit quantification over natural numbers encoded by traces in~$\emptyset^i\set{\intprop}\emptyset^\omega$, where $\intprop \notin \ap$ is a fresh proposition.

Recall that $\setS$ is a set of \pltl formulas over $\ap$, i.e., Remark~\ref{remstutteringspecialcases} applies. Thus, the $i$-th $\setS$-successor of $(\emptyset^i\set{\intprop}\emptyset^\omega,0)$ is the unique pointed trace~$(\emptyset^i\set{\intprop}\emptyset^\omega, j)$ satisfying the formula~$\intprop$, which is the case for $j = i$. Thus, we can simulate the evaluation of the formula~$\psi$ at the $i$-th $(\setS,\ctx)$-successor by the formula~$\C{(\ctx \cup \{\tr_i\})\setminus\set{\tr'}}\F_\setS(\intprop_{\tr_i} \wedge \C \ctx \psi)$, where $\ctx$ is still the current context, i.e., we add $\tr_i$ to the current context to reach the $i$-th $\setS$-successor (over the extended context~$\ctx \cup \{\tr_i\}$) and then evaluate $\psi$ over the context~$\ctx$ that our original formula is evaluated over. But, to simulate the quantification of $\tr'$ correctly, we have to take it out of the scope for the eventually operator in order to ensure that the evaluation of $\psi$ takes place on the initial pointed trace, as we have moved the quantifier for $\tr'$ before the eventually.

To implement the same approach for the past operators, we also need to be able to let time proceed backwards from the position of $\emptyset^i\set{\intprop}\emptyset^\omega$ marked by $\intprop$ back to the initial position. 
To identify that position by a formula, we rely on the fact that $\neg \Y \true_{\tr}$ holds exactly at position~$0$ of the trace bound to $\tr$, where $\true_{\tr}$ is a shorthand for $p_{\tr} \vee \neg p_{\tr}$ for some $p \in\ap$.
Then, the $i$-th $\setS$-predecessor of the unique position marked by $\intprop$ is the unique position where $\neg \Y \true_{\tr}$ holds.
So, we define $\setL_\intdeco = \set{\emptyset^{i}\set{\intprop}\emptyset^\omega \mid i\in\nats}$. 

The proof of the next lemma shows that one can, in a similar way, move quantifiers over \emph{all} operators. 
\begin{lemma}\label{lem:pnn}
Let $\ap$ be a finite set of propositions and $\intprop\not\in\ap$.
    For every \ghltl sentence~$\phi$ over $\ap$, there exists a \ghltl sentence~$\phi_p$ in prenex normal form over $\ap\cup \{\intprop\}$ such that for all nonempty $\setL  \subseteq (\pow\ap)^\omega$:
    $\setL \models\phi$ if and only if $\setL \cup \setL_\intdeco\models\phi_p$.
\end{lemma}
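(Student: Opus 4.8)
The plan is to prove the lemma by structural induction on $\phi$, pushing quantifiers outward past every temporal operator and context operator one at a time, using the encoding scheme sketched in the preamble. The statement to establish is actually a prenex-normal-form transformation: I want to transform $\phi$ into $\phi_p$ while preserving the set of models up to the fixed auxiliary set $\setL_\intdeco = \set{\emptyset^i\set{\intprop}\emptyset^\omega \mid i \in \nats}$. The key invariant I would maintain is a generalized statement relating $(\setL, \asg, \ctx) \models \psi$ to $(\setL \cup \setL_\intdeco, \asg, \ctx) \models \psi_p$ for all subformulas $\psi$, assignments $\asg$, and contexts $\ctx$ — not just for the top-level sentence. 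This is essential because the induction descends under quantifiers and temporal operators where the context and assignment change.

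\medskip

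\textbf{First I would} handle the base case and the boolean connectives, which are routine since adding traces in $\setL_\intdeco$ (all labeled over $\ap \cup \set\intprop$ with $\intprop$ tracking fresh information disjoint from $\ap$) does not affect atomic satisfaction $\p_\tr$ for $\p \in \ap$, nor the truth of negations and disjunctions. \textbf{Then I would} treat the context operator $\C{\ctx'}\psi$, which merely changes the active context and commutes with the translation. The heart of the argument is the temporal operators. For an eventually-style operator such as $\F_\setS$ sitting above a quantifier $\exists \tr'$, I would follow the sketch: replace the implicit quantification over $(\setS,\ctx)$-successor positions by an \emph{explicit} quantification over a fresh trace variable $\tr_i$ bound to a trace in $\setL_\intdeco$, using the rewriting $\C{(\ctx \cup \set{\tr_i}) \setminus \set{\tr'}}\F_\setS(\intprop_{\tr_i} \wedge \C\ctx \psi)$ and then pulling $\exists \tr'$ out in front. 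By Remark~\ref{remstutteringspecialcases}, since $\setS$ uses only propositions in $\ap$ and each trace $\emptyset^i\set\intprop\emptyset^\omega$ has every position as an $\setS$-changepoint, the $(\setS,\ctx \cup \set{\tr_i})$-successors step through positions one at a time, so the unique position satisfying $\intprop_{\tr_i}$ pinpoints exactly the index $i$ I want. For the past operators $\Y_\setS, \Since_\setS$ I would use the marker $\neg \Y \true_{\tr}$ to identify the initial position and let time run backwards from the $\intprop$-marked position, as described.

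\medskip

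\textbf{The main obstacle} I anticipate is the bookkeeping when pulling quantifiers out past temporal operators that are \emph{themselves nested inside other temporal operators}, combined with the interaction between the current context $\ctx$ and the freshly added context variables $\tr_i$. When I move $\exists \tr'$ outward past $\F_\setS$, I must ensure the trace bound to $\tr'$ is re-anchored at its \emph{initial} position, because quantification always ranges over initial pointed traces (as emphasized after the semantics), whereas the temporal operator may have advanced the pointer on other traces in the context. The rewriting achieves this by evaluating $\psi$ under the restored context $\ctx$ (not the augmented one) and placing the quantifier before the eventually, but verifying that the pointer on $\asg(\tr')$ is genuinely $0$ and that no pointer on a trace in $\ctx$ is spuriously advanced by the auxiliary stepping requires care. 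A second delicate point is that each application of the transformation introduces a fresh variable $\tr_i$, so I must argue the set of variables $\vars$ can be enlarged finitely (bounded by the number of temporal operators in $\phi$) and that the added quantifiers are correctly typed to range over $\setL_\intdeco$ — which I can enforce syntactically by a guard asserting the quantified trace has the shape $\emptyset^*\set\intprop\emptyset^\omega$, expressible since $\intprop$ occurs exactly once on such traces.

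\medskip

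\textbf{Finally I would} assemble the inductive cases into the prenex form by iterating the quantifier-extraction, confirming that after processing every operator the resulting $\phi_p$ has all quantifiers in front, and that the equivalence $\setL \models \phi \iff \setL \cup \setL_\intdeco \models \phi_p$ is the instantiation of the maintained invariant at the empty assignment, context $\vars$, and the guards forcing the newly quantified variables into $\setL_\intdeco$. I expect the correctness proof of each rewriting step to reduce, via Remark~\ref{remstutteringspecialcases}, to the observation that stepping the context by one on a trace $\emptyset^i\set\intprop\emptyset^\omega$ advances its pointer by one, so that reaching the $\intprop$-marked position is equivalent to taking exactly $i$ successor steps — the crux that makes the explicit encoding of positions faithful.
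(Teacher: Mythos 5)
Your plan follows the same route as the paper: simulate the implicit position quantification of the temporal operators by explicit quantification over the traces~$\emptyset^i\set{\intprop}\emptyset^\omega$ of $\setL_\intdeco$, using the context trick~$\C{(\ctx\cup\set{\tr_i})\setminus\set{\tr'}}\F_\setS(\intprop_{\tr_i}\wedge\C\ctx\psi)$ for future operators and the $\neg\Y\true_\tr$ marker for past ones, together with a generalized invariant over assignments and contexts; the two obstacles you flag (re-anchoring pulled-out variables at position~$0$ by excluding them from stepping contexts, and guarding the auxiliary variables so they range over $\setL_\intdeco$) are exactly the right ones.

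However, there is a genuine gap: you guard only the \emph{newly introduced} variables, never the \emph{original} ones. Over the enlarged model~$\setL\cup\setL_\intdeco$, every original quantifier of $\phi_p$ also ranges over the traces of $\setL_\intdeco$, and with respect to the propositions in $\ap$ these traces are indistinguishable from $\emptyset^\omega$, which need not correspond to any trace of $\setL$. This breaks your invariant at the very first quantifier case, in both directions. Concretely, take $p\in\ap$, $\phi=\forall\tr.\F_\emptyset\, p_\tr$, and $\setL=\set{\set{p}^\omega}$: this sentence is already prenex with quantifier-free matrix, so your translation leaves it essentially unchanged, and $\setL\models\phi$, yet $\setL\cup\setL_\intdeco\not\models\phi$ because the trace~$\set{\intprop}\emptyset^\omega$ never satisfies $p$. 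Dually, $\exists\tr.\G_\emptyset\neg p_\tr$ is false over $\setL$ but true over $\setL\cup\setL_\intdeco$, so neither implication of the claimed equivalence survives. The fix is forced and must be threaded through the entire induction: relativize every original quantifier to traces on which $\intprop$ never holds, e.g.\ replace $\exists\tr'.\chi$ by $\exists\tr'.\big(\C{\tr'}\G_\emptyset\neg\intprop_{\tr'}\big)\wedge\chi_p$ and $\forall\tr'.\chi$ by $\forall\tr'.\big(\C{\tr'}\G_\emptyset\neg\intprop_{\tr'}\big)\rightarrow\chi_p$. This exactly captures membership in $\setL$, since $\setL\subseteq(\pow\ap)^\omega$ while every trace of $\setL_\intdeco$ contains $\intprop$; and the guard remains sound after the quantifier is pulled past temporal operators, by the same context-exclusion mechanism (here reinforced by the $\C{\tr'}$ wrapper) that keeps $\tr'$ anchored at position~$0$. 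With this relativization added, your induction goes through.
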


The previous lemma shows that for every \ghltl sentence~$\phi$, there exists a \ghltl sentence~$\phi_p$ in prenex normal form that is almost equivalent, i.e., equivalent modulo adding the countable set~$\setL_\intdeco$ of traces to the model. 
Hence, showing that every satisfiable sentence in prenex normal form has a countable model implies that every satisfiable sentence has a countable model.
Thus, we focus on prenex normal form sentences to study the cardinality of models of \ghltl.

The proof of the next lemma shows that every model of a sentence~$\varphi$ contains a countable subset that is closed under the application of Skolem functions, generalizing a similar construction for \hyltl~\cite{FZ17}.
Such a set is also a model of $\varphi$.

\begin{lemma}\label{lem:countpnn}
    Every satisfiable \ghltl sentence~$\phi$ in prenex normal form has a countable model.
\end{lemma}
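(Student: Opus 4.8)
The plan is to take a model~$\setL$ of the prenex sentence~$\phi = Q_1 \tr_1 \cdots Q_n \tr_n.\ \psi$, where $\psi$ is quantifier-free, and extract a countable submodel by closing a countable seed under Skolem functions for the existential quantifiers, mimicking the classical downward Löwenheim--Skolem argument as it was carried out for \hyltl in~\cite{FZ17}. First I would fix, for each existentially quantified variable~$\tr_k$, a Skolem function~$\skolem_k$ that takes the traces already assigned to the universally quantified variables preceding~$\tr_k$ (together with the choices made for earlier existential variables) and returns a witnessing trace in~$\setL$; these exist precisely because $\setL \models \phi$. The key point is that, by Remark~\ref{remarkquantfreesemantics}, the truth of the quantifier-free matrix~$\psi$ under a fully-instantiated assignment depends only on the assignment and the context, \emph{not} on the ambient set of traces, so shrinking $\setL$ to a subset does not disturb satisfaction of~$\psi$ once all quantifiers have been instantiated.

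Next I would build the countable set inductively. Start from any single trace~$\tra_0 \in \setL$ (nonempty since $\phi$ is satisfiable) and set $\setL_0 = \set{\tra_0}$. Given a countable $\setL_j$, form $\setL_{j+1}$ by adding, for every tuple of traces drawn from~$\setL_j$ assigned to the universal variables and every compatible choice for the earlier existentials, the trace~$\skolem_k(\cdots)$ for each existential variable~$\tr_k$. Since $\setL_j$ is countable there are only countably many such tuples, so each $\setL_{j+1}$ is again countable, and the union~$\setL' = \bigcup_j \setL_j$ is countable and closed under all the Skolem functions. I would then verify $\setL' \models \phi$ by checking each quantifier in order: for a universal~$Q_k = \forall$, any choice from~$\setL'$ is in particular a choice from~$\setL$; for an existential~$Q_k = \exists$, closure under~$\skolem_k$ guarantees a witness already lives in~$\setL'$. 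Once all variables are instantiated by traces in~$\setL'$, the matrix~$\psi$ evaluates identically in~$\setL'$ and in~$\setL$ by Remark~\ref{remarkquantfreesemantics}, so the witnesses chosen in~$\setL$ still succeed in~$\setL'$.

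The main obstacle I expect is not the counting but confirming that the Skolem/matrix argument is insensitive to the asynchronous machinery of \ghltl: the stuttering successors~$\succ_{(\setS,\ctx)}$ and predecessors~$\pred_{(\setS,\ctx)}$, the context switches~$\C\ctx$, and the quantifier-free past operators all operate on the \emph{already-assigned} pointed traces and manipulate only pointers and contexts, never consulting~$\setL$ itself. Consequently the evaluation of~$\psi$ under a complete assignment is genuinely a property of the finitely many pointed traces in the assignment together with the context, which is exactly the content of Remark~\ref{remarkquantfreesemantics}; I would make sure to invoke it in the form that permits freely swapping~$\setL$ for~$\setL'$ at the matrix. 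A secondary point worth stating carefully is that the Skolem functions must be chosen once and for all over the full model~$\setL$ before the induction begins, so that the witnesses they return are stable as~$\setL'$ grows. With these observations in place, the construction yields a countable~$\setL' \subseteq \setL$ with $\setL' \models \phi$, completing the proof.
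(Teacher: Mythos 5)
Your proposal is correct and takes essentially the same route as the paper: the paper's proof also extracts a countable subset of a given model by closing it under Skolem functions for the existential quantifiers, generalizing the \hyltl construction of~\cite{FZ17}, and relies on the fact (Remark~\ref{remarkquantfreesemantics}) that satisfaction of the quantifier-free matrix depends only on the assignment and context, not on the ambient trace set. No gaps to report.
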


By combining Lemma~\ref{lem:pnn} and Lemma~\ref{lem:countpnn} we obtain our main result of this section.

\begin{theorem}\label{thm:countable}
    Every satisfiable \ghltl formula has a countable model.
\end{theorem}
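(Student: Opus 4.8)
The plan is to derive Theorem~\ref{thm:countable} directly from the two preceding lemmas, which together do all the heavy lifting. Let $\phi$ be an arbitrary satisfiable \ghltl formula. First I would observe that it suffices to treat \emph{sentences}: a formula with free trace variables is satisfiable iff its existential closure is, and the existential closure is a sentence that has a countable model exactly when the original formula does. So I assume $\phi$ is a satisfiable sentence over some finite proposition set~$\ap$, witnessed by a nonempty model~$\setL \subseteq (\pow\ap)^\omega$.

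Next I apply Lemma~\ref{lem:pnn} to obtain a prenex-normal-form sentence~$\phi_p$ over $\ap \cup \set{\intprop}$ such that, for every nonempty $\setL' \subseteq (\pow\ap)^\omega$, we have $\setL' \models \phi$ if and only if $\setL' \cup \setL_\intdeco \models \phi_p$. Instantiating this equivalence with the witnessing model~$\setL$ yields $\setL \cup \setL_\intdeco \models \phi_p$, so in particular $\phi_p$ is satisfiable. Since $\phi_p$ is in prenex normal form, Lemma~\ref{lem:countpnn} applies and gives a \emph{countable} model~$\setL_c \models \phi_p$.

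The final step is to transfer countability back from $\phi_p$ to $\phi$. Here I would be careful about the direction of Lemma~\ref{lem:pnn}: the equivalence is stated for models of the shape~$\setL' \cup \setL_\intdeco$, so I cannot directly feed the abstract countable model~$\setL_c$ into it. Instead I would argue that $\setL_c$ already contains the relevant structure, or more cleanly, reprove countability at the level of $\phi$ as follows. Apply the Skolem-closure construction underlying Lemma~\ref{lem:countpnn} not to $\phi_p$ but directly to the original witness: starting from $\setL$, take a countable subset closed under the Skolem functions for the quantifiers of $\phi$. The subtlety is that $\phi$ need not be in prenex form, so quantifiers may appear under temporal operators; however, Lemma~\ref{lem:pnn} shows these in-scope quantifiers behave like ordinary Skolemizable quantifiers once positions are encoded by $\setL_\intdeco$. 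Concretely, I would set $\setL_p := \setL \cup \setL_\intdeco$, note $\setL_p \models \phi_p$, extract via Lemma~\ref{lem:countpnn}'s construction a countable Skolem-closed subset~$\setL_p' \subseteq \setL_p$ with $\setL_\intdeco \subseteq \setL_p'$ (which we may ensure since $\setL_\intdeco$ is already countable), and write $\setL_p' = \setL'' \cup \setL_\intdeco$ with $\setL'' := \setL_p' \cap (\pow\ap)^\omega \subseteq \setL$ countable. Then $\setL'' \cup \setL_\intdeco \models \phi_p$, so by Lemma~\ref{lem:pnn} (using that $\setL''$ is nonempty, which holds because $\setL$ is nonempty and we include a witness trace in the closure) we get $\setL'' \models \phi$, exhibiting the desired countable model.

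The main obstacle is precisely this last transfer: Lemma~\ref{lem:countpnn} only guarantees \emph{some} countable model of $\phi_p$, whereas Lemma~\ref{lem:pnn} requires a model of the restricted form~$\setL' \cup \setL_\intdeco$. The proof must therefore either run the countable-submodel extraction \emph{inside} the specific model~$\setL \cup \setL_\intdeco$ (so that the auxiliary traces~$\setL_\intdeco$ are retained and the $\ap$-part is a subset of the original model), rather than invoke Lemma~\ref{lem:countpnn} as a black box, or else separately verify that every countable model of $\phi_p$ can be reshaped into one of the required form. I expect the clean route to be the former: keep $\setL_\intdeco$ fixed throughout, which is harmless since it is countable, and only thin out the genuine $\ap$-traces via Skolem closure.
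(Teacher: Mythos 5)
Your proof is correct and follows essentially the same route as the paper, which obtains Theorem~\ref{thm:countable} precisely by combining Lemma~\ref{lem:pnn} with Lemma~\ref{lem:countpnn}. The subtlety you flag --- that one must run the Skolem-closure extraction underlying Lemma~\ref{lem:countpnn} inside the specific model~$\setL \cup \setL_\intdeco$, keeping $\setL_\intdeco$ and a witness trace from $\setL$ in the seed so that the resulting countable model has the form~$\setL'' \cup \setL_\intdeco$ required to apply Lemma~\ref{lem:pnn} in the reverse direction, rather than invoking Lemma~\ref{lem:countpnn} as a black box --- is exactly the care the combination needs, and your resolution of it is sound.
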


\section{\texorpdfstring{\ghltl}{Generalized HyperLTL with Stuttering and Contexts} Satisfiability}
\label{secsat}

In this section, we study the satisfiability problem for \ghltl and its fragments: 
Given a sentence~$\phi$, is there a set~$\setL$ of traces such that $\setL \models \phi$?
Due to Theorem~\ref{thm:countable}, we can restrict ourselves to countable models.
The proof of the following theorem shows that the existence of such a model can be captured in arithmetic, generalizing constructions developed for \hyltl~\cite{hyperltlsat} to handle stuttering and contexts.

\begin{theorem}\label{thm:sat}
    The \ghltl satisfiability problem is $\Sigma^1_1$-complete.
\end{theorem}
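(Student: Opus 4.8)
The plan is to prove both the lower bound and the upper bound for $\Sigma_1^1$-completeness. The lower bound is immediate: \hyltl is a syntactic fragment of \ghltl, and \hyltl satisfiability is already known to be $\Sigma_1^1$-hard~\cite{hyperltlsat}, so \ghltl satisfiability inherits this hardness. The entire weight of the theorem therefore falls on the upper bound, namely showing that the \ghltl satisfiability problem is in $\Sigma_1^1$.

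Let me think about the upper bound carefully.

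The reduction target is: encode "there exists a countable set $\setL$ of traces with $\setL \models \phi$" as a $\Sigma_1^1$ formula of arithmetic. By Theorem~\ref{thm:countable}, restricting to countable models is sound and complete. So the structure of the $\Sigma_1^1$ formula should be:
$$\exists (\text{encoding of countable } \setL).\ (\text{arithmetic formula asserting } \setL \models \phi).$$

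A countable set of traces can be encoded as a single subset of $\nats$ (via a pairing function), so the existential second-order quantifier block is fine for $\Sigma_1^1$. The issue is whether the body — the assertion $\setL \models \phi$ — can be written with only first-order (number) quantifiers, because $\Sigma_1^1$ only allows existential second-order quantifiers followed by a first-order matrix.

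Now here's the subtlety. The semantics of \ghltl involves:
- Trace quantification $\exists \tr, \forall \tr$ ranging over $\setL$. Since $\setL$ is countable and encoded as a set of naturals, quantifying over a trace in $\setL$ becomes first-order number quantification (quantify over the index of a trace). Good — this is the key point that distinguishes from \hyctlstar, which needed uncountable models.
- Temporal operators with stuttering $\U_\setS$, $\Since_\setS$, etc. These involve computing $\setS$-successors/predecessors and iterating them. The changepoint structure depends on \pltl satisfaction, which is itself arithmetically definable (indeed first-order arithmetic suffices since \pltl satisfaction at a position of a trace encoded by a number is decidable/arithmetic).
- Contexts, which track which traces time passes on.

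So let me think about how to handle the semantics.

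Wait — the main obstacle. The temporal operators and the changepoint/stuttering machinery require iterating successor functions on trace assignments. A trace assignment is a finite tuple of pointed traces (one per variable in scope). Since there are finitely many trace variables, an assignment is a finite tuple of (trace-index, position) pairs, which is a single natural number. Iterating $\succ_{(\setS,\ctx)}$ $i$ times is an arithmetic recursion. The key question is whether the satisfaction relation $(\setL, \asg, \ctx) \models \psi$ for the full formula can be defined by first-order arithmetic over the encoding.

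Here is the concern: \ghltl allows quantification under temporal operators. So the satisfaction relation is not simply a bounded recursion — a subformula $\psi$ may itself contain $\exists \tr$. But since trace quantification ranges over the countable $\setL$ (encoded in $\nats$), each such quantifier is just a first-order number quantifier. The satisfaction relation is defined by structural recursion on the fixed finite formula $\phi$. **The hard part will be** formalizing this satisfaction relation in arithmetic: we need an arithmetic predicate $\mathrm{Sat}(\ulcorner\psi\urcorner, \text{code of } \asg, \text{code of }\ctx)$ that correctly captures $\models$, where the recursion bottoms out in \pltl satisfaction (first-order) and the recursive cases for $\U_\setS$, $\Since_\setS$ quantify over iteration counts $i$ (first-order) and over witness traces (first-order, over $\setL$'s index set). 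Because $\phi$ is a fixed finite formula, I expect to unfold the recursion explicitly — define the satisfaction of each subformula by a formula whose quantifier structure mirrors the subformula, rather than needing a general arithmetic fixpoint. Each temporal operator introduces a bounded-type-0 recursion over iterated successors; the successor iteration and changepoint computation are arithmetically definable because \pltl-membership of a coded trace at a coded position is arithmetic, and iterating a first-order-definable function a number-many times is first-order definable via a standard course-of-values/sequence-coding argument in $(\nats, +, \cdot, <)$.

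Concretely, the plan proceeds in these steps. First, fix an arithmetic encoding: a trace in $(\pow{\ap})^\omega$ is coded by a subset of $\nats$ (one bit per proposition per position), so a countable $\setL$ is coded by a single $X \subseteq \nats$ via pairing, and membership "trace $n$ belongs to $\setL$" and "proposition $p$ holds at position $j$ of trace $n$" are arithmetic in $X$. Second, show that \pltl satisfaction $(\tra, i) \models \theta$, for a fixed $\theta$, is first-order definable from the code of $\tra$ — standard, since \pltl is evaluated pointwise with bounded-range until/since handled by first-order number quantifiers over positions. Third, show that $\setS$-changepoints, and hence $\succ_\setS$, $\pred_\setS$, and their $(\setS,\ctx)$-lifted iterates, are first-order definable from trace codes — using sequence coding to express "$i$-fold iteration." Fourth, define $\mathrm{Sat}$ by recursion on the fixed formula $\phi$: atomic and Boolean cases are immediate; the context operator just updates the coded context; temporal operators translate to first-order quantification over iteration counts using the iterate predicate; and trace quantifiers translate to first-order quantification over indices into $\setL$. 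Fifth, assemble: $\phi$ is satisfiable iff $\exists X.\ (X \text{ codes a nonempty set of traces}) \wedge \mathrm{Sat}(\ulcorner\phi\urcorner, \emptyset, \vars)$, which is $\Sigma_1^1$. I would remark that this generalizes the \hyltl construction of~\cite{hyperltlsat}, the only genuinely new work being the arithmetic definability of the stuttering successor/predecessor functions and the handling of contexts, both of which reduce to first-order definability of \pltl satisfaction and of iterated first-order functions.
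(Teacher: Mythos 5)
Your proposal is correct and follows essentially the same route as the paper: the lower bound is inherited from \hyltl, and the upper bound uses Theorem~\ref{thm:countable} to restrict to countable models, which are then encoded by an existentially quantified set of naturals, with the \ghltl semantics (including \pltl-definable changepoints, iterated $(\setS,\ctx)$-successors via sequence coding, contexts, and trace quantification as first-order quantification over indices) expressed in the first-order matrix, generalizing the \hyltl construction of~\cite{hyperltlsat}. The paper's proof proceeds exactly along these lines, so there is nothing substantive to add.
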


As \hyltl is a fragment of \hltlc and of \hltls, which in turn are fragments of \ghltl, we also settle the complexity of their satisfiability problem as well. 

\begin{corollary}
  The \hltlc and \hltls satisfiability problems are both $\Sigma_1^1$-complete.
\end{corollary}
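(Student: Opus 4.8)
The plan is a straightforward sandwiching argument that leverages the syntactic fragment relationships fixed in the preliminaries together with Theorem~\ref{thm:sat} and the known complexity of \hyltl satisfiability.

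For the upper bound, I would observe that both \hltlc and \hltls are syntactic fragments of \ghltl whose native semantics coincide with the \ghltl semantics when restricted to the respective syntactic shape (empty-set-indexed operators yield the standard synchronous stepping by Remark~\ref{remstutteringspecialcases}, and a context operator, where present, is interpreted exactly as in the full logic). Consequently, every instance of \hltlc- or \hltls-satisfiability is literally an instance of \ghltl-satisfiability with the same answer, so membership in $\Sigma_1^1$ is inherited directly from Theorem~\ref{thm:sat}.

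For the lower bound, I would use that \hyltl is in turn a fragment of both \hltlc and \hltls: a \hyltl sentence is obtained by further disallowing the context operator (resp.\ restricting all stuttering sets to $\emptyset$), and again its semantics is unchanged under the richer interpretation. Hence the identity map is a (trivial) reduction from \hyltl-satisfiability to each of the two fragment problems. Since \hyltl-satisfiability is $\Sigma_1^1$-hard (indeed $\Sigma_1^1$-complete~\cite{hyperltlsat}), this transfers $\Sigma_1^1$-hardness to both \hltlc- and \hltls-satisfiability. Combining the two bounds yields $\Sigma_1^1$-completeness in each case.

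The only point requiring genuine care --- and hence the main (if modest) obstacle --- is verifying that the syntactic inclusions are semantics-preserving, i.e.\ that reading a \hyltl, \hltlc, or \hltls sentence as a \ghltl sentence does not change its set of models. For the empty-set indexing this is exactly Remark~\ref{remstutteringspecialcases}; for the absence of contexts one checks that evaluation proceeds throughout with the fixed initial context~$\vars$, matching the synchronous HyperLTL semantics. This is routine bookkeeping settled once when the fragments are defined, so no machinery beyond Theorem~\ref{thm:sat} and the established hardness of \hyltl satisfiability is needed.
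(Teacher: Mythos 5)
Your proposal is correct and follows exactly the paper's argument: the upper bound is inherited from Theorem~\ref{thm:sat} because \hltlc and \hltls are fragments of \ghltl, and the lower bound transfers from $\Sigma_1^1$-hardness of \hyltl satisfiability~\cite{hyperltlsat} since \hyltl is a fragment of both. Your additional remark that the fragment inclusions are semantics-preserving is the same (implicit) bookkeeping the paper relies on when defining the fragments in the preliminaries.
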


Thus, maybe slightly surprisingly, all four satisfiability problems have the same complexity, even though \ghltl adds stuttering, contexts, and quantification under the scope of temporal operators to \hyltl.
This result should also be compared with the \hyctlstar satisfiability problem, which is $\Sigma^2_1$-complete~\cite{hyperltlsat}, i.e., much harder.
\hyctlstar is obtained by extending \hyltl with just the ability to quantify under the scope of temporal operators. However, it has a branching-time semantics and trace quantification ranges over trace suffixes starting at the current position of the most recently quantified trace.
This allows one to write a formula that has only uncountable models, the crucial step towards obtaining the $\Sigma_1^2$-lower bound.
In comparison, \ghltl has a linear-time semantics and trace quantification ranges over initial traces, which is not sufficient to enforce uncountable models.

\section{Model-Checking \texorpdfstring{\ghltl}{Generalized HyperLTL with Stuttering and Contexts}}

In this section, we settle the complexity of the model-checking problems for \ghltl and its fragments: 
Given a sentence~$\phi$ and a transition system~$\TS$, do we have $\TS \models \phi$?
Recall that for \hyltl, model-checking is decidable. 
We show here that \ghltl model-checking is equivalent to truth in second-order arithmetic, with the lower bounds already holding for \hltlc and \hltls, i.e., adding only contexts and adding only stuttering makes \hyltl model-checking much harder.

The proof is split into three lemmata. 
We begin with the upper bound for full \ghltl.
Here, in comparison to the upper bound for satisfiability, we have to work with possibly uncountable models, as the transition system may have uncountably many traces.

The proof of the upper bound encodes the semantics of \ghltl over (possibly) uncountable sets of traces in arithmetic. 

\begin{lemma}\label{lem:mcG}
\ghltl model-checking is reducible to truth in second-order arithmetic.
\end{lemma}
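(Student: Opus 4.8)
The plan is to construct, given a \ghltl sentence~$\phi$ and a transition system~$\TS$, a sentence~$\psi_{\phi,\TS}$ of second-order arithmetic such that $\TS \models \phi$ if and only if $\natsstruct \models \psi_{\phi,\TS}$. The central modelling decision is how to represent traces, which are elements of $(\pow\ap)^\omega$, by type~$1$ objects. Fixing an enumeration of $\ap = \{\prop_0,\ldots,\prop_{m-1}\}$, I would encode a trace~$\tra$ by a set~$X \subseteq \nats$ where membership of $\langle i, k\rangle$ (under a fixed first-order-definable pairing function) records whether $\prop_k \in \tra(i)$. Crucially, in contrast to the satisfiability upper bound (Theorem~\ref{thm:sat}), here I do \emph{not} quantify existentially over a countable model; instead the whole set~$\traces(\TS)$ is represented implicitly. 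A type~$1$ object~$X$ encodes a trace of $\TS$ precisely when $X$ is the trace of some run~$v_0 v_1 \cdots$ of $\TS$; since $\TS$ is finite, the predicate \myquot{$X \in \traces(\TS)$} is expressible in arithmetic by asserting the existence of a function $\nats \to V$ (itself a type~$1$ object over the finite~$V$) that starts in~$I$, respects~$E$, and whose labelling matches~$X$.

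Next I would encode pointed trace assignments. Since $\vars$ is finite, an assignment~$\asg$ is a finite tuple of pointed traces, i.e.\ a bounded list of pairs $(X_\tr, i_\tr)$ where $X_\tr$ is a type~$1$ object and $i_\tr$ a type~$0$ pointer. The successor and predecessor maps $\succ_{(\setS,\ctx)}$ and $\pred_{(\setS,\ctx)}$ act on such encodings by advancing or retracting each pointer~$i_\tr$ with $\tr \in \ctx$ to the next/previous $\setS$-changepoint, leaving pointers for $\tr \notin \ctx$ fixed. The key subroutine is an arithmetic predicate $\mathrm{cp}_\setS(X,i)$ expressing \myquot{$i$ is an $\setS$-changepoint of the trace encoded by~$X$}: this unfolds the \pltl semantics of each $\theta \in \setS$ at positions~$i$ and $i-1$, comparing truth values, together with the finite-changepoint convention. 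Because \pltl is evaluated over a single pointed trace, its satisfaction relation $(\tra,i) \models \theta$ is itself straightforwardly arithmetic (one quantifies over positions for $\U$, $\Since$, etc.), so $\mathrm{cp}_\setS$ and hence the successor/predecessor pointer-update relations are first-order definable over~$\natsstruct$ given~$X$.

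With these building blocks, I would define the translation~$(\setL,\asg,\ctx) \models \varphi$ by recursion on the structure of the \ghltl formula~$\varphi$, producing an arithmetic formula whose free variables encode the components of~$\asg$ (and with~$\ctx$ a fixed finite parameter baked into the formula, since contexts range over a finite domain). Atoms~$\prop_\tr$ become membership assertions about $X_\tr$ at~$i_\tr$; Boolean connectives are homomorphic; $\C{\ctx'}\varphi$ simply switches the context parameter; the temporal operators become first-order quantification over an iteration count~$j$ together with the iterated pointer-update relation (for $\U_\setS$ one asserts existence of~$i$ with $\varphi_2$ holding at the $i$-th $(\setS,\ctx)$-successor and $\varphi_1$ at all earlier ones, and dually for $\Since_\setS$ with the defined-ness guard on predecessors); and, most importantly, trace quantifiers $\exists\tr.\varphi$ and $\forall\tr.\varphi$ become second-order quantifiers guarded by the $\traces(\TS)$-membership predicate, resetting $i_\tr$ to~$0$ to reflect that quantification ranges over \emph{initial} pointed traces. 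I expect the main obstacle to be the careful handling of the iterated successor/predecessor operators: I must define, uniformly in~$j$, the relation \myquot{applying $\succ_{(\setS,\ctx)}$ exactly $j$ times to~$\asg$ yields~$\asg'$} as a single arithmetic predicate (rather than a macro-expanded finite iteration, which is impossible since~$j$ is unbounded). This requires encoding the finite sequence of intermediate pointer-tuples as a type~$1$ object and asserting the step relation holds at each index, while also threading the defined-ness condition for predecessors through the entire iteration; getting the quantifier structure of this sequence-encoding right, without accidentally pushing the formula beyond second order, is where the real bookkeeping lies. Since \pltl satisfaction, $\traces(\TS)$-membership, and the changepoint predicate are all expressible with only first-order (type~$0$) quantification, and the only second-order quantifiers introduced are those directly mirroring the \ghltl trace quantifiers and the bounded sequence encodings over~$\nats$, the resulting sentence~$\psi_{\phi,\TS}$ lies within second-order arithmetic, completing the reduction.
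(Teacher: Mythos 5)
Your proposal is correct and follows essentially the same route as the paper's proof: encoding traces as type~$1$ objects, expressing membership in $\traces(\TS)$ arithmetically via runs of the finite transition system (thereby handling the possibly uncountable trace set, in contrast to the countable-model argument used for satisfiability), and translating the \ghltl semantics---including the $\setS$-changepoint predicate, iterated $(\setS,\ctx)$-successors/predecessors via coded finite sequences, and trace quantifiers reset to initial positions---by structural recursion into second-order arithmetic.
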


Next, we prove the matching lower bounds for the two fragments~\hltls and \hltlc of \ghltl.
This shows that already stuttering alone and contexts alone reach the full complexity of \ghltl model-checking. 

We proceed as follows: traces over a single proposition~$\inprop$ encode sets of natural numbers, and thus also natural numbers (via singleton sets).
Hence, trace quantification in a transition system that has all traces over~$\set{\inprop}$ can mimic first- and second-order quantification in arithmetic.
The main missing piece is thus the implementation of addition and multiplication using only stuttering  and using only contexts. 
In the following, we present such implementations, thereby showing that one can embed second-order arithmetic in both \hltls and \hltlc.
To implement multiplication, we need to work with traces of the form~$\tra = \set{\auxprop}^{m_0} \emptyset^{m_1}\set{\auxprop}^{m_2} \emptyset^{m_3}\cdots$ for some auxiliary proposition~$\auxprop$.
We call a maximal infix of the form~$\set{\auxprop}^{m_j}$ or $\emptyset^{m_j}$ a block of $\tra$.
If we have $m_0 = m_1 = m_2 = \cdots$, then we say that $\tra$ is periodic and call $m_0$ the period of $\tra$.

\begin{lemma}\label{lem:mcS}
  Truth in second-order arithmetic is reducible to \hltls model checking.
\end{lemma}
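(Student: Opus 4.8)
The plan is to reduce truth in second-order arithmetic to \hltls model-checking by constructing a fixed transition system~$\TS$ whose traces encode all subsets of~$\nats$, and then translating a given sentence~$\phi$ of second-order arithmetic into a \hltls sentence~$\phi'$ such that $\natsstruct \models \phi$ if and only if $\TS \models \phi'$. First I would let $\TS$ generate exactly the traces over a single proposition~$\inprop$, i.e.\ all of~$(\pow{\set{\inprop}})^\omega$; such a transition system is a single vertex with a self-loop and two possible labels, so it has all $\contcard$-many traces. A trace then encodes the set $\set{i \in \nats \mid \inprop \in \tra(i)}$, and a trace with exactly one occurrence of~$\inprop$ encodes a singleton, hence a natural number. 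Second-order quantifiers become trace quantifiers directly; first-order quantifiers become trace quantifiers relativized to the \pltl-definable property $\F \inprop_\tr \wedge \G(\inprop_\tr \rightarrow \X \G \neg \inprop_\tr)$ that holds exactly for singleton-encoding traces.

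The core of the reduction is implementing the arithmetic relations~$<$, addition, and multiplication on these encodings using only stuttering. For the order~$<$ on singletons, comparing the unique positions of two~$\inprop$-marks is straightforward with synchronous \hyltl machinery, so the real content is addition and multiplication, which is where stuttering does the work. The key step is to use stutter-indexed temporal operators to change the rate at which the pointer advances on a trace depending on the proposition pattern it reads. Concretely, using a set~$\setS$ of \pltl formulas over an auxiliary proposition, a stutter-step jumps from one $\setS$-changepoint to the next, so by encoding numbers as block lengths in traces of the form $\set{\auxprop}^{m_0}\emptyset^{m_1}\set{\auxprop}^{m_2}\emptyset^{m_3}\cdots$ (as introduced just before the statement), one stuttering step skips an entire block. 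This lets me count blocks rather than individual positions, which is exactly the mechanism needed to relate block lengths additively and, for periodic traces, multiplicatively.

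I would therefore build, for addition, a \hltls formula asserting that three singleton-encoded numbers~$a$,~$b$,~$c$ satisfy $a + b = c$ by using an auxiliary trace that encodes~$a$ and~$b$ as two consecutive blocks and then checking, via stuttered advancement, that the total length equals~$c$; the stuttering lets the formula treat each block as a single unit while synchronous sub-operators verify the internal lengths. For multiplication $a \cdot b = c$, I would exploit periodic traces: encode a trace that is periodic with period~$b$ and has $a$ many blocks, so that traversing all $a$ blocks at the stuttered rate while simultaneously tracking position at the synchronous rate forces the total synchronous distance to be $a \cdot b$; verifying periodicity and block count is again expressible with a bounded set~$\setS$ of stutter formulas together with ordinary \pltl. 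Each such arithmetic predicate is packaged as a \hltls formula with its own existentially quantified auxiliary traces (quantification ranges over all of~$(\pow{\set{\inprop}})^\omega$, but I would allow the encoding alphabet to include~$\auxprop$, or reuse~$\inprop$ with a more elaborate pattern, adjusting~$\TS$ accordingly).

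The main obstacle I expect is implementing multiplication with stuttering alone while keeping the auxiliary traces definable inside the transition system: stuttering only reshapes how time advances, so I cannot freely invoke arbitrary helper structures but must instead \emph{guess} auxiliary traces by quantification and then \emph{verify} with \pltl-definable stutter sets that they have the required periodic/block structure relating~$a$,~$b$, and~$c$. Getting the verification formulas right — ensuring that a trace is genuinely periodic with the intended period, that blocks align across the quantified traces at the stuttered rate, and that no "junk" traces in~$\setL$ can spuriously satisfy the multiplication clause — is the delicate part, and it is precisely the place where the construction must differ from the synchronous \hyltl encoding. Once addition and multiplication are correctly captured, the inductive translation of an arbitrary second-order arithmetic sentence into \hltls is routine: atomic formulas map to the arithmetic-predicate gadgets, Boolean connectives map directly, and quantifiers map to (relativized) trace quantifiers, yielding $\natsstruct \models \phi$ iff $\TS \models \phi'$ and hence the desired reduction.
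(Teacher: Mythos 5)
Your high-level architecture coincides with the paper's: a fixed transition system whose traces over $\inprop$ encode sets (singletons for numbers), trace quantifiers mimicking both levels of arithmetic quantification, an easy gadget for $<$, and stuttering-based gadgets for $+$ and $\cdot$ built from block-shaped and periodic auxiliary traces. But the proof's real content lies exactly in the two places you defer, and your sketches there would not go through as stated. For addition, the mechanism that makes stuttering useful is that a stutter-indexed operator advances \emph{differently} on different traces: by one position on traces not containing the indexing proposition, and directly to the next mark on traces that do contain it. This forces distinct propositions for traces playing different roles (the paper introduces a proposition $\inpropm{y}$ for \emph{each} first-order variable $y$); your suggestion to \myquot{reuse $\inprop$ with a more elaborate pattern} destroys this asymmetry, since a stutter step w.r.t.\ $\inprop$ would jump on all number traces simultaneously. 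Moreover, traversing blocks of an auxiliary trace by stuttered steps and then \myquot{checking that the total length equals $c$} fails, because during those stuttered steps the trace encoding $c$ advances only by one position per step, so the pointer on it carries no information about the distance travelled. The paper resolves this by existentially quantifying an auxiliary trace onto which the marks of $y_2$ and $y_3$ are copied via synchronous $\G_\emptyset$ biconditionals, so that the final check happens on the auxiliary trace itself after a careful alternation of stuttered and synchronous operators ($\X_{\inpropm{y_2}}$, then $\F_\emptyset$, then $\X_\emptyset$).

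The larger gap is periodicity for multiplication. You assert it is \myquot{expressible with a bounded set~$\setS$ of stutter formulas together with ordinary \pltl}, but the set of periodic block traces (all blocks of equal, unspecified length) is not $\omega$-regular, and \pltl with stutter-indexed operators evaluated over a \emph{single} trace only defines $\omega$-regular properties, since the changepoint sampling is itself regular; no such formula can compare unboundedly many block lengths. The paper's solution is genuinely hyper: it quantifies a second auxiliary trace over a distinct proposition $\auxpropp$, forces the two traces to have identical block structure synchronously ($\alpha_2$), and then uses an operator stuttering w.r.t.\ $\set{\auxprop,\auxpropp}$-changepoints combined with a shifted comparison ($\X_{\auxpropp}$ followed by a synchronous until) to equate each block length $m_j$ with the next one $m_{j+1}$ ($\alpha_3$), which yields $m_0 = m_1 = m_2 = \cdots$. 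Your multiplication idea of counting stutter steps against a periodic trace is indeed the paper's counting mechanism, but without a concrete construction enforcing periodicity (and without the mark-copying device above to read off the product), the gadget --- and hence the reduction --- is not established; identifying the obstacle, as you do, is not the same as overcoming it.
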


\begin{proof}
We present a polynomial-time translation mapping sentences~$\phi$ of second-order arithmetic to pairs~$(\TS, \hyperize(\phi))$ of transition systems~$\TS$ and \hltls sentences~$\hyperize(\phi)$ such that $\natsstruct \models \phi$ if and only if $\TS \models \hyperize(\phi)$.
Intuitively, we capture the semantics of arithmetic in \hltls.

We begin by formalizing our encoding of natural numbers and sets of natural numbers using traces. 
Intuitively, a trace~$\tra$ over a set~$\ap$ of propositions containing the proposition~$\inprop$ encodes the set~$\set{n \in \nats \mid \inprop \in \tra(n)} \subseteq \nats$.
In particular, a trace~$\tra$ encodes a singleton set if it satisfies the formula~$(\neg \inprop) \U (\inprop \wedge \X\G\neg \inprop)$.
In the following, we use the encoding of singleton sets to encode natural numbers as well.
Obviously, every set and every natural number is encoded by a trace in that manner.
Thus, we can mimic first- and second-order quantification by quantification over traces.

However, to implement addition and multiplication using only stuttering, we need to adapt this simple encoding: 
We need a unique proposition for each first-order variable in $\phi$.
So, let us fix a sentence~$\phi$ of second-order arithmetic and let $V_1$ be the set of first-order variables appearing in $\phi$.
We use the set~$\ap = \set{\inprop} \cup \set{\inpropm{y} \mid y \in V_1} \cup \set{\auxprop, \auxpropp}$ of propositions, where $\auxprop$ and $\auxpropp$ are auxiliary propositions used to implement multiplication.

We define the function~$\hyperize$ mapping second-order formulas to \hltls formulas:
\begin{itemize}
    
    \item $\hyperize(\exists Y. \psi) = \exists \tr_Y. \left(\G_\emptyset \bigwedge_{p \neq \inprop} \neg p_{\tr_Y} \right) \wedge \hyperize(\psi)$.
    
    \item $\hyperize(\forall Y. \psi) = \forall \tr_Y. \left(\G_\emptyset \bigwedge_{p \neq \inprop} \neg p_{\tr_Y}\right) \rightarrow \hyperize(\psi)$.
    
    \item $\hyperize(\exists y. \psi) = \exists \tr_y. \left(\G_\emptyset \bigwedge_{p \neq \inpropm{y}} \neg p_{\tr_y}\right) \wedge \left((\neg (\inpropm{y})_{\tr_y}) \U_\emptyset ((\inpropm{y})_{\tr_y} \wedge \X_\emptyset\G_\emptyset\neg (\inpropm{y})_{\tr_y})\right) \wedge \hyperize(\psi)$.
    
    \item $\hyperize(\forall y. \psi) = \forall \tr_y. \left[\left(\G_\emptyset \bigwedge_{p \neq \inpropm{y}} \neg p_{\tr_y}\right) \wedge \left((\neg (\inpropm{y})_{\tr_y}) \U_\emptyset ((\inpropm{y})_{\tr_y} \wedge \X_\emptyset\G_\emptyset\neg (\inpropm{y})_{\tr_y})\right)\right] \rightarrow \hyperize(\psi)$.
    
    \item $\hyperize(\neg \psi) = \neg \hyperize(\psi)$.
    
    \item $\hyperize(\psi_1 \vee \psi_2) = \hyperize(\psi_1) \vee \hyperize(\psi_2)$.

    \item $\hyperize(y \in Y) = \F_\emptyset((\inpropm{y})_{\tr_y} \wedge \inprop_{\tr_Y})$.

    \item $\hyperize(y_1 < y_2) = \F_\emptyset( (\inpropm{y_1})_{\tr_{y_1}} \wedge \X_\emptyset\F_\emptyset  (\inpropm{y_2})_{\tr_{y_2}})$.
    
\end{itemize}
At this point, it remains to implement addition and multiplication in \hltls.
Let $\asg$ be an assignment that maps each $\tr_{y_j}$ for $j \in \set{1,2,3}$ to a pointed trace~$(\tra_j, 0)$ for some $\tra_j $ of the form
\[
\emptyset^{n_j}\set{\inpropm{y_j}}\emptyset^\omega \tag{$\dagger$}
\] which is the form of traces that trace variables~$\tr_{y_j}$ encoding first-order variables~$y_j$ of $\phi$ range over.
Our goal is to write formulas~$\hyperize(y_1 + y_2 = y_3)$ and $\hyperize(y_1 \cdot y_2 = y_3)$ with free variables~$\tr_{y_1}, \tr_{y_2}, \tr_{y_3}$ that are satisfied by $\asg$ if and only if $n_1 + n_2 = n_3$ and $n_1 \cdot n_2 = n_3$, respectively.

We begin with addition.
We assume that $n_1$ and $n_2$ are both non-zero and handle the special cases $n_1=0$ and $n_2=0$  later separately.
Consider the formula
\[
\alpha_\adddeco = \exists \traux. \left[\psi \wedge \X_{\inpropm{y_2}} \F_\emptyset ( (\inpropm{y_1})_{\tr_{y_1}} \wedge \X_\emptyset (\inpropm{y_3})_\traux )\right],
\]
where
$
\psi = 
\G_\emptyset [ (\inpropm{y_2})_{\tr_{y_2}} \leftrightarrow (\inpropm{y_2})_{\traux} ] \wedge 
\G_\emptyset [ (\inpropm{y_3})_{\tr_{y_3}} \leftrightarrow (\inpropm{y_3})_{\traux} ]
$
holds if the truth values of $\inprop{y_2}$ coincide on $\asg(y_2)$ and $\asg(\traux)$ and those of $\inprop{y_3}$ coincide on $\asg(y_3)$ and $\asg(\traux)$, respectively (see Figure~\ref{figstutteringadd}).

   \begin{figure}
   \centering
        \begin{tikzpicture}[thick]

        \node at (-.5,1) {$\tr_{y_1}$};
        \node at (-.5,0) {$\tr_{y_2}$};
        \node at (-.5,-1) {$\tr_{y_3}$};
        \node at (-.5,-2) {$\tr$};

        \draw[->, > = stealth] (0,1) -- (12,1);
        \draw[->, > = stealth] (0,0) -- (12,0);
        \draw[->, > = stealth] (0,-1) -- (12,-1);
        \draw[->, > = stealth] (0,-2) -- (12,-2);

        \foreach \i in {0,1,...,11}{
    \draw (\i,-.1) -- (\i, .1);
    \draw (\i,.9) -- (\i, 1.1);
    \draw (\i,-.9) -- (\i, -1.1);
    \draw (\i,-1.9) -- (\i, -2.1);
  }
\def\y{.4}
\node at (5, 1+\y) {\footnotesize$\inpropm{y_1}$}; 
\node at (4, 0+\y) {\footnotesize$\inpropm{y_2}$};      
\node at (9, -1+\y) {\footnotesize$\inpropm{y_3}$}; 
\node at (4, -2+\y) {\footnotesize$\inpropm{y_2}$};      
\node at (9, -2+\y) {\footnotesize$\inpropm{y_3}$}; 

\node[circle,fill=gray!20, minimum size =12,inner sep = 0] at (1,1+\y) {\footnotesize a};
\node[circle,fill=gray!20, minimum size =12,inner sep = 0] (a2) at (3.3,0+\y) {\footnotesize a};
\draw[->, >=stealth] (a2) edge[bend right=10] (4,0.15);
\node[circle,fill=gray!20, minimum size =12,inner sep = 0] at (1,-1+\y) {\footnotesize a};
\node[circle,fill=gray!20, minimum size =12,inner sep = 0] (a4) at (3.3,-2+\y) {\footnotesize a};
\draw[->, >=stealth] (a4) edge[bend right=10] (4,-1.85);

\node[circle,fill=gray!20, minimum size =12,inner sep = 0] (b1) at (4.3,1+\y) {\footnotesize b};
\draw[->, >=stealth] (b1) edge[bend right=10] (5,1.15);
\node[circle,fill=gray!20, minimum size =12,inner sep = 0] at (8,0+\y) {\footnotesize b};
\node[circle,fill=gray!20, minimum size =12,inner sep = 0] at (5,-1+\y) {\footnotesize b};
\node[circle,fill=gray!20, minimum size =12,inner sep = 0] at (8,-2+\y) {\footnotesize b};

\node[circle,fill=gray!20, minimum size =12,inner sep = 0] at (6,1+\y) {\footnotesize c};
\node[circle,fill=gray!20, minimum size =12,inner sep = 0] at (9,0+\y) {\footnotesize c};
\node[circle,fill=gray!20, minimum size =12,inner sep = 0] at (6,-1+\y) {\footnotesize c};
\node[circle,fill=gray!20, minimum size =12,inner sep = 0] (c4) at (9.7,-2+\y) {\footnotesize c};
\draw[->, >=stealth] (c4) edge[bend left=10] (9,-1.85);

\foreach \i in {0,1,...,11}{
\node at (\i,-2.5) {\footnotesize\i};
}
            
        \end{tikzpicture}
        \caption{The formula~$\alpha_\adddeco$ implements addition, illustrated here for $n_1 = 5$ and $n_2 = 4$. }
        \label{figstutteringadd}
        
    \end{figure}
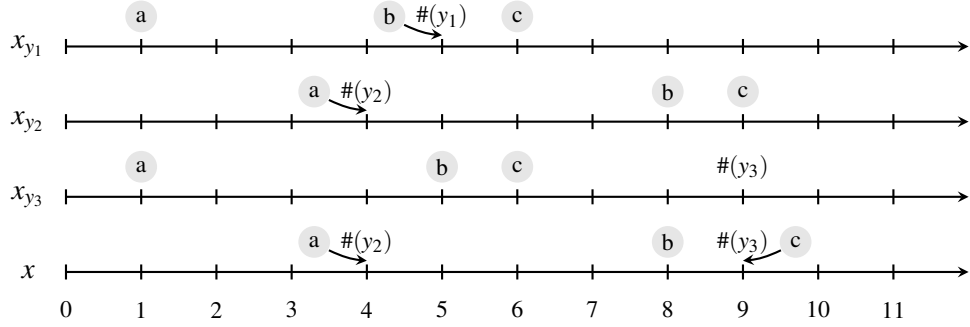

Now, consider an assignment~$\asg$ satisfying $\psi$ and ($\dagger$).
The outer next operator in $\alpha_\adddeco$ (labeled by $\inpropm{y_2}$) updates the pointers in $\asg$ to the ones marked by \myquot{a}.
For the traces assigned to $\tr_{y_1}$ and $\tr_{y_3}$ this is due to the fact that both traces do not contain~$\inpropm{y_2}$, which implies that every position is a $\inpropm{y_2}$-changepoint in these traces. 
On the other hand, both the traces assigned to $\tr_{y_2}$ and $\traux$ contain a $\inpropm{y_2}$. Hence, the pointers are updated to the first position where $\inpropm{y_2}$ holds (here, we use $n_2 > 0$).

Next, the eventually operator (labeled by $\emptyset$) updates the pointers in $\asg$ to the ones marked by \myquot{b}, as $\inpropm{y_1}$ has to hold on $\asg(\tr_{y_1})$. 
The distance between the positions marked \myquot{a} and \myquot{b} here is exactly $n_1 -1$ (here, we use $n_1 > 0$) and is applied to all pointers, as each position on each trace is a $\emptyset$-changepoint.

Due to the same argument, the inner next operator (labeled by $\emptyset$) updates the pointers in $\asg$ to the ones marked by \myquot{c}. 
In particular, on the trace~$\asg(\traux)$, this is position~$n_1 + n_2$.
As we require $\inpropm{y_3}$ to hold there (and thus also at the same position on $\asg(\tr_{y_3})$, due to $\psi$), we have indeed expressed $n_1 + n_2 = n_3$.

Accounting for the special cases $n_1 = 0$ (first line) and $n_2 = 0$ (second line), we define
\begin{align*}
\hyperize(y_1 + y_2 = y_3) = 
{}&{} \left[(\inpropm{y_1})_{\tr_{y_1}} \wedge \F_\emptyset( (\inpropm{y_2})_{\tr_{y_2}} \wedge (\inpropm{y_3})_{\tr_{y_3}} ) \right] \vee \\
{}&{}\left[ (\inpropm{y_2})_{\tr_{y_2}} \wedge \F_\emptyset( (\inpropm{y_1})_{\tr_{y_1}} \wedge (\inpropm{y_3})_{\tr_{y_3}} )\right] \vee \\
{}&{} \left[ \neg (\inpropm{y_1})_{\tr_{y_1}} \wedge \neg (\inpropm{y_2})_{\tr_{y_2}} \wedge \alpha_\adddeco \right].
\end{align*}

So, it remains to implement multiplication. 
Consider an assignment~$\asg$ satisfying $(\dagger)$. 
We again assume $n_1$ and $n_2$ to be non-zero and handle the special cases $n_1 =0$ and $n_2=0$ later. 
The formula
\begin{align*}
\alpha_1 = 
{}&{} \auxprop_\traux \wedge \G_\emptyset \F_\emptyset \auxprop_{\traux} \wedge \G_\emptyset \F_\emptyset \neg \auxprop_{\traux} \wedge \G_\emptyset \bigwedge\nolimits_{p \neq \auxprop,\inpropm{y_3}} \neg p_\traux \wedge \\
{}&{} \auxpropp_{\trauxp} \wedge \G_\emptyset \F_\emptyset \auxpropp_{\trauxp} \wedge \G_\emptyset \F_\emptyset \neg \auxpropp_{\trauxp} \wedge \G_\emptyset \bigwedge\nolimits_{p \neq \auxpropp} \neg p_{\trauxp} 
\end{align*}
with two (fresh) free variables~$\traux$ and $\trauxp$ expresses that 
    if $\asg(\traux) = (\tra, i)$ satisfies $i=0$, then $\tra$ is of the form~$\tra_{y_3} \cup \set{\auxprop}^{m_0}\emptyset^{m_1}\set{\auxprop}^{m_2}\emptyset^{m_3}\cdots$, for $\tra_{y_3} \in (\pow{\set{\inpropm{y_3}}})^\omega$ and non-zero~$m_j$, and
    if $\asg(\trauxp) = (\tra', i')$ satisfies $i'=0$, then $\tra'$ is of the form~$\set{\auxpropp}^{m_0'}\emptyset^{m_1'}\set{\auxpropp}^{m_2'}\emptyset^{m_3'}\cdots$, for non-zero~$m_j'$.
Here, $\cup$ denotes the pointwise union of two traces. 
The part~$\tr_{y_3}$ will only become relevant later, so we ignore it for the time being.

Then, the formula~$\alpha_2 = \G_\emptyset (\auxprop_{\traux} \leftrightarrow \auxpropp_{\trauxp}) $ is satisfied by $\asg$ if and only if $m_j = m_j'$ for all $j$.
If $\alpha_1\wedge\alpha_2$ is satisfied, then the $\set{\auxprop, \auxpropp}$-changepoints on $\tra$ \emph{and} $\tra'$ are $0, m_0, m_0 + m_1, m_0 + m_1 + m_2, \ldots$.
Now, consider 
\begin{align*}
\alpha_3 = \G_{\set{\auxprop, \auxpropp}}
\Big[ 
{}&{} \big[ 
\auxprop_\traux \rightarrow \X_{\auxpropp} \big( (\auxprop_\traux \wedge \neg\auxpropp_{\trauxp})\U_\emptyset (\neg \auxprop_\traux \wedge \neg\auxpropp_{\trauxp} \wedge \X_\emptyset \auxpropp_{\trauxp} ) \big) 
\big] \wedge\\
 {}&{}
\big[
\neg\auxprop_\traux \rightarrow \X_{\auxpropp} \big( (\neg\auxprop_\traux \wedge \auxpropp_{\trauxp})\U_\emptyset (\auxprop_\traux \wedge \auxpropp_{\trauxp} \wedge \X_\emptyset \neg\auxpropp_{\trauxp} ) \big) 
\big] 
\Big]   
\end{align*}
and a trace assignment~$\asg$ satisfying $\alpha_1 \wedge \alpha_2 \wedge \alpha_3$ and such that the pointers of $\asg(\traux)$ and $\asg(\trauxp)$ are both $0$.
Then, the always operator of $\alpha_3$ updates both pointers of $\asg(\traux)$ and $\asg(\trauxp)$ to some  $\set{\auxprop,\auxpropp}$-changepoint as argued above (see the positions marked \myquot{a} in Figure~\ref{figstutteringper}). This is the beginning of an infix of the form~$\set{\auxprop}^{m_j}$ in $\traux$ and of the form~$\set{\auxpropp}^{m_j}$ in $\trauxp$ or the beginning of an infix of the form~$\emptyset^{m_j}$ in $\traux$ and in $\trauxp$.

   \begin{figure}
   \centering
        \begin{tikzpicture}[thick]

        \node at (-.5,1) {$\traux$};
        \node at (-.5,0) {$\trauxp$};

        \draw[] (0,1) -- (2.5,1);
        \draw[] (0,0) -- (2.5,0);
        \draw[dotted] (2.5,1) -- (4,1);
        \draw[dotted] (2.5,0) -- (4,0);
        \draw[->, > = stealth] (4,1) -- (12,1);
        \draw[->, > = stealth] (4,0) -- (12,0);

        \foreach \i in {0,.5,...,2.5,4,4.5,...,11.5}{
    \draw (\i,-.1) -- (\i, .1);
    \draw (\i,.9) -- (\i, 1.1);
  }

        \foreach \i in {0,.5,...,2,5,5.5,...,7,10,10.5,...,11.5}{
   \draw[fill,red] (\i,1.1) circle (.03);
\draw[fill,red] (\i,1-.1) circle (.03);

   \draw[fill,red] (\i,.1) circle (.03);
\draw[fill,red] (\i,-.1) circle (.03);
  }

\def\y{.4}
\node[circle, fill=gray!20, minimum size =12,inner sep = 0] at (5,1+\y) {\footnotesize a};
\node[circle, fill=gray!20, minimum size =12,inner sep = 0] at (5,0-\y) {\footnotesize a};
\node[circle, fill=gray!20, minimum size =12,inner sep = 0] at (5.5,1+\y) {\footnotesize b};
\node[circle, fill=gray!20, minimum size =12,inner sep = 0] at (7.5,0-\y) {\footnotesize b};
\node[circle, fill=gray!20, minimum size =12,inner sep = 0] at (7.5,1+\y) {\footnotesize c};
\node[circle, fill=gray!20, minimum size =12,inner sep = 0] at (9.5,0-\y) {\footnotesize c};

\foreach \i in {5.5,6,...,7.5}{
\draw[<->,>=stealth] (\i,.85) -- (\i+2,.15);
}
  




            
        \end{tikzpicture}
        \caption{The formula~$\alpha_3$ ensures that the traces assigned to $\traux$ and $\trauxp$ are periodic. Here, \myquot{\,\raisebox{-0.25ex}{\begin{tikzpicture}[thick]
\protect\draw(0,-.1) -- (0, .1);
\protect\draw[fill,red] (0,.1) circle (.03);
\protect\draw[fill,red] (0,-.1) circle (.03);
        \end{tikzpicture}}\,} denotes a position where $\auxprop$ ($\auxpropp$) holds and \myquot{\,\raisebox{-0.25ex}{\begin{tikzpicture}[thick]
\protect\draw(0,-.1) -- (0, .1);
        \end{tikzpicture}}\,} a position where $\auxprop$ ($\auxpropp$) does not hold.}
        \label{figstutteringper}
        
    \end{figure}

Let us assume we are in the former case, the latter is dual.
Then, the premise of the upper implication of $\alpha_3$ holds, i.e., $
\X_{\auxpropp} ( (\auxprop_\traux \wedge \neg\auxpropp_{\trauxp})\U_\emptyset (\neg \auxprop_\traux \wedge \neg\auxpropp_{\trauxp} \wedge \X_\emptyset \auxpropp_{\trauxp} ) )$ must be satisfied as well.
The next operator (labeled by $\set{\auxpropp}$ only) increments the pointer of $\asg(\traux)$ by one (as $\tra$ does not contain any $\auxpropp$) and updates the one of $\asg(\trauxp)$ to the position after the infix~$\set{\auxpropp}^{m_j}$ in $\tra'$ 
(see the pointers marked with \myquot{b}).
Now, the until formula only holds if we have $m_j = m_{j+1}$, as it compares the positions marked by the diagonal lines until it \myquot{reaches} the positions marked with \myquot{c}.
As the reasoning holds for every $j$ we conclude that we have $m_0 = m_1 = m_2 = \cdots$, i.e., we have constructed a periodic trace with period~$m_0$ that will allow us to implement multiplication by $m_0$.
From here on, we ignore $\asg(\trauxp)$, as it is only needed to construct~$\asg(\traux)$.

Next, we relate the trace~$\asg(\traux)$ to the traces~$\tr_{y_j}$ encoding the numbers we want to multiply using
\begin{align*}
\alpha_\multdeco = \exists \traux. \exists \trauxp.{}&{} \alpha_1 \wedge \alpha_2 \wedge \alpha_3 \wedge ( \auxprop_\traux \U_\emptyset (\neg \auxprop_\traux \wedge (\inpropm{y_1})_{\tr_{y_1}}) ) \wedge\\
{}&{}\G_\emptyset((\inpropm{y_3})_{\tr_{y_3}} \leftrightarrow (\inpropm{y_3})_{\traux}) \wedge \F_{\auxprop}[(\inpropm{y_2})_{\tr_{y_2}} \wedge (\inpropm{y_3})_{\traux} ],    
\end{align*}
which additionally expresses that $m_0$ is equal to $n_1$, that $\inpropm{y_3}$ holds on $\asg(\traux)$ at position~$n_3$ as well (and nowhere else), and finally that $n_3$ is equal to $n_1 \cdot n_2$:
This follows from the fact that we stutter $n_2$ times on $\tr_{y_2}$ to reach the position where $\inpropm{y_2}$ holds (as every position is a $\auxprop$-changepoint on $\tra_2$). Thus, we reach position~$m_0 \cdot n_2 = n_1 \cdot n_2$ on $\asg(\traux)$, at which $\inpropm{y_3}$ must hold. 
As $\inpropm{y_3}$ holds at the same position on $\tr_3$, we have indeed captured $n_1 \cdot n_2 = n_3$.

So, taking the special cases $n_1 = 0$ and $n_2 = 0$ (in the first two disjuncts) into account, we define
\begin{align*}
\hyperize(y_1 \cdot y_2 = y_3) = 
{}&{} [(\inpropm{y_1})_{\tr_{y_1}}\!\! \wedge (\inpropm{y_3})_{\tr_{y_3}}  ]\vee  [(\inpropm{y_2})_{\tr_{y_2}}\!\! \wedge (\inpropm{y_3})_{\tr_{y_3}} ] \vee 
  [\neg (\inpropm{y_1})_{\tr_{y_1}}\!\! \wedge \neg (\inpropm{y_2})_{\tr_{y_2}} \wedge \alpha_\multdeco].
\end{align*}

Now, let $\TS$ be a transition system with 
$
\traces(\TS) = (\pow{\set{\inprop}})^\omega \cup \bigcup_{y \in V_1}  (\pow{\set{\inpropm{y},\auxprop}})^\omega\cup (\pow{\set{\auxpropp}})^\omega$,
where $V_1$ still denotes the set of first-order variables in the sentence~$\phi$ of second-order arithmetic. Here, $(\pow{\set{\inprop}})^\omega$ contains the traces to mimic set quantification, 
$\bigcup_{y \in V_1}  (\pow{\set{\inpropm{y},\auxprop}})^\omega$ contains the traces to mimic first-order quantification and for the variable~$\traux$ used in the definition of multiplication, and $(\pow{\set{\auxpropp}})^\omega$ contains the traces for $\trauxp$, also used in the definition of multiplication.
Such a $\TS$ can, given $\phi$, be constructed in polynomial time.
An induction shows that we have $\natsstruct \models \varphi$ if and only if $\TS \models \hyperize(\phi)$. Note that while $\hyperize(\varphi)$ is not necessarily in prenex normal form, it can be brought into that as no quantifier is under the scope of a temporal operator, i.e., into a \hltls formula.
\end{proof}

Next, we consider the lower-bound for the second fragment, i.e., \hyltl with contexts.

\begin{lemma}\label{lem:mcC}
    Truth in second-order arithmetic is reducible to \hltlc model checking.
\end{lemma}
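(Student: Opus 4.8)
The plan is to follow the architecture of the proof of Lemma~\ref{lem:mcS}: I give a polynomial-time map sending each sentence~$\phi$ of second-order arithmetic to a pair~$(\TS,\hyperize(\phi))$ with $\natsstruct\models\phi$ iff $\TS\models\hyperize(\phi)$, where traces over~$\set{\inprop}$ again encode sets of naturals and, via singletons, naturals, so that trace quantification mimics first- and second-order quantification. The clauses of $\hyperize$ for $\exists$, $\forall$, $\neg$, $\vee$ and for the atoms~$y\in Y$ and $y_1<y_2$ can be reused verbatim from Lemma~\ref{lem:mcS}: those formulas only inspect the traces actually named in them and are insensitive to the synchronous advancement of the remaining traces, so no contexts are needed there. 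Accordingly I use a transition system with $\traces(\TS)=(\pow{\set{\inprop}})^\omega\cup\bigcup_{y\in V_1}(\pow{\set{\inpropm{y}}})^\omega$, where $V_1$ is the set of first-order variables of $\phi$; this is again constructible in polynomial time. The only genuinely new work is to realize the arithmetic with contexts instead of stuttering.

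Addition is where contexts are convenient. Writing~$\tra_j=\emptyset^{n_j}\set{\inpropm{y_j}}\emptyset^\omega$ for the trace assigned to~$\tr_{y_j}$ (pointer~$0$), I would put
\[
\hyperize(y_1+y_2=y_3)=\C{\tr_{y_1},\tr_{y_3}}\F_\emptyset\Bigl((\inpropm{y_1})_{\tr_{y_1}}\wedge\C{\tr_{y_2},\tr_{y_3}}\F_\emptyset\bigl((\inpropm{y_2})_{\tr_{y_2}}\wedge(\inpropm{y_3})_{\tr_{y_3}}\bigr)\Bigr).
\]
In the context~$\set{\tr_{y_1},\tr_{y_3}}$ the outer~$\F_\emptyset$ advances the pointers of $\tr_{y_1}$ and $\tr_{y_3}$ in lockstep while freezing~$\tr_{y_2}$ at~$0$; since $\inpropm{y_1}$ marks only position~$n_1$, both reach~$n_1$. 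Switching to~$\set{\tr_{y_2},\tr_{y_3}}$ then advances~$\tr_{y_2}$ (from~$0$) and~$\tr_{y_3}$ (from~$n_1$) together for exactly~$n_2$ steps, placing~$\tr_{y_3}$ on~$n_1+n_2$, where the last conjunct checks~$\inpropm{y_3}$. Thus the formula holds iff $n_1+n_2=n_3$, and the freezing handles the cases~$n_1=0$ and~$n_2=0$ automatically, so no extra disjuncts are required.

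The main obstacle is multiplication, and here contexts behave very differently from stuttering. A stuttered operator such as~$\F_\auxprop$ lets a periodic trace advance by a full period while a mark-free trace advances by one position, i.e.\ it makes two traces progress at different rates inside a single modality; this is exactly what drives the stuttering multiplication of Lemma~\ref{lem:mcS}. A context, by contrast, advances every selected trace at the same rate and only freezes the others, so no single context walk can couple $n_2$ jumps of length~$n_1$ to $n_2$ unit steps. Since $\hyperize(\phi)$ must be a prenex \hltlc sentence, I also cannot iterate context walks a data-dependent number of times, so there is no direct analogue of the one-shot stuttering gadget.

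I would therefore not implement~$\cdot$ directly, but eliminate it beforehand: multiplication is definable in second-order arithmetic from~$+$ and~$<$ alone. Concretely, the set of squares is pinned down by a linear recurrence: a set~$S$ equals $\set{0,1,4,9,\ldots}$ iff $0,1\in S$, $S$ is unbounded, and for all $S$-consecutive $a<b<c$ one has $a+c+2=b+b$ (that is, $c-b=(b-a)+2$), which uses only~$+$ and~$<$; then $y=x^2$ is expressed, relative to such an~$S$, by saying that $y\in S$ and its $S$-predecessor~$a$ satisfies $a+x+x=y+1$ (the gap before the $x$-th square is~$2x-1$), and finally $z=x\cdot y$ by $z+z+x^2+y^2=(x+y)^2$ read off through the squaring predicate. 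Replacing every atom~$y_1\cdot y_2=y_3$ in~$\phi$ by this fixed $\cdot$-free second-order formula yields a sentence~$\phi'$ over~$(+,<,\in)$ with $\natsstruct\models\phi$ iff $\natsstruct\models\phi'$, and the substitution is polynomial. Since $\phi'$ only uses features already handled — the extra set and number quantifiers introduced by the definition are translated by the verbatim quantifier clauses — $\hyperize(\phi')$ is well defined, and as no quantifier of $\hyperize(\phi')$ occurs under a temporal operator it can be brought into prenex form, i.e.\ into a \hltlc sentence. A routine induction on~$\phi'$, using the correctness of the addition gadget above and of the reused atomic clauses, then gives $\TS\models\hyperize(\phi')$ iff $\natsstruct\models\phi$. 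The delicate points to get right in the full write-up are the uniqueness argument for the squares recurrence and the bookkeeping confirming that the final formula is a legal prenex \hltlc sentence.
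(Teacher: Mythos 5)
Your proposal is correct, but it takes a genuinely different route from the paper at the one place where the real difficulty sits. The skeleton coincides: the paper uses the same encoding of sets and numbers as traces, essentially the same quantifier and atomic clauses, and \emph{exactly} your context-based addition gadget $\C{\tr_{y_1},\tr_{y_3}}\F\bigl[\inprop_{\tr_{y_1}}\land\C{\tr_{y_2},\tr_{y_3}}\F(\inprop_{\tr_{y_2}}\land\inprop_{\tr_{y_3}})\bigr]$, also without zero special-cases, for the reason you give. The divergence is multiplication: the paper implements $y_1\cdot y_2=y_3$ \emph{directly} in \hltlc, using an auxiliary proposition~$\auxprop$ and context formulas ($\alpha_1\wedge\alpha_2\wedge\alpha_3$, i.e., $\alpha_\per$) to enforce two periodic traces of periods $n_2$ and $n_2-1$, and then exploiting the number-theoretic fact that the least $z>0$ with $z\cdot(n_2-1)=z'\cdot n_2-n_1$ is $z=n_1$, so that a context walk synchronizing the two periodic traces lands exactly on position $n_1\cdot n_2$ of $\tr_{y_3}$ (plus case distinctions on $n_1\le n_2$ vs.\ $n_2<n_1$). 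You instead eliminate multiplication at the level of arithmetic, using its second-order definability from $+$ and $<$ via the squares set (pinned down by the gap recurrence $a+c+2=2b$ on $S$-consecutive elements) and the polarization identity $2z+x^2+y^2=(x+y)^2$; after this rewriting only addition needs a gadget. Both arguments are sound; your squares characterization and its uniqueness argument check out. Your route is more elementary on the logic side: it avoids the auxiliary proposition, the periodic-trace machinery, and the minimality argument, and the same trick would equally simplify the stuttering lower bound of Lemma~\ref{lem:mcS}. What the paper's route buys is a reduction that is self-contained inside \hltlc (no preprocessing of the arithmetic sentence) and the independently interesting demonstration that contexts alone suffice to build multiplication gadgets. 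Three small repairs for a full write-up of your version: the transition system must supply traces $(\pow{\set{\inpropm{y}}})^\omega$ for all first-order variables of the \emph{rewritten} sentence $\phi'$, including the fresh variables introduced by eliminating $\cdot$ (you define $\TS$ from the variables of $\phi$); terms such as $a+c+2=2b$ must be flattened into the atomic shapes your translation supports, using successor/constants defined from $<$ and further fresh variables; and the definition of $y=x^2$ via the $S$-predecessor needs the trivial extra disjunct $x=0\wedge y=0$, since $0$ has no $S$-predecessor.
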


\begin{proof}
We present a polynomial-time translation mapping sentences~$\phi$ of second-order arithmetic to pairs~$(\TS, \hyperize(\phi))$ of transition systems~$\TS$ and \hltls sentences~$\hyperize(\phi)$ such that $\natsstruct \models \phi$ if and only if $\TS \models \hyperize(\phi)$.
Intuitively, we will capture the semantics of arithmetic in \hltlc.
As the temporal operators in \hltlc are all labeled by the empty set, we simplify our notation by dropping them in this proof, i.e., we just write $\X$, $\F$, $\G$, and $\U$.

As in the proof of Lemma~\ref{lem:mcS}, we encode natural numbers and sets of natural numbers by traces.
Here, it suffices to consider a single proposition~$\inprop$ to encode these and an additional proposition~$\auxprop$ that we use to implement multiplication, i.e., our \hltlc formulas are built over $\ap = \set{\inprop, \auxprop}$.

We again define a function~$\hyperize$ mapping second-order formulas to \hltlc formulas:
\begin{itemize}
    
    \item $\hyperize(\exists Y. \psi) = \exists \tr_Y. \left(\G \neg \auxprop_{\tr_Y} \right) \wedge \hyperize(\psi)$.
    
    \item $\hyperize(\forall Y. \psi) = \forall \tr_Y. \left(\G \neg \auxprop_{\tr_Y}\right) \rightarrow \hyperize(\psi)$.
    
    \item $\hyperize(\exists y. \psi) = \exists \tr_y. \left(\G \neg \auxprop_{\tr_y}\right) \wedge \left((\neg \inprop_{\tr_y}) \U (\inprop_{\tr_y} \wedge \X\G\neg \inprop_{\tr_y})\right) \wedge \hyperize(\psi)$.
    
    \item $\hyperize(\forall y. \psi) = \forall \tr_y. \left[\left(\G \neg \auxprop_{\tr_y}\right) \wedge \left((\neg \inprop_{\tr_y}) \U (\inprop_{\tr_y} \wedge \X\G\neg \inprop_{\tr_y})\right)\right] \rightarrow \hyperize(\psi)$.
    
    \item $\hyperize(\neg \psi) = \neg \hyperize(\psi)$.
    
    \item $\hyperize(\psi_1 \vee \psi_2) = \hyperize(\psi_1) \vee \hyperize(\psi_2)$.

    \item $\hyperize(y \in Y) = \F(\inprop_{\tr_y} \wedge \inprop_{\tr_Y})$.

    \item $\hyperize(y_1 < y_2) = \F( \inprop_{\tr_{y_1}} \wedge \X\F \inprop_{\tr_{y_2}})$.

\end{itemize}

At this point, it remains to consider addition and multiplication.
Let $\asg$ be an assignment that maps each $\tr_{y_j}$ for $j \in \set{1,2,3}$ to a pointed trace~$(\tra_j, 0)$ for some $\tra_j $ of the form~$ \emptyset^{n_j}\set{\inprop}\emptyset^\omega$, which is the form of traces that the $\tr_{y_j}$ encoding first-order variables~$y_j$ of $\phi$ range over.
Our goal is to write formulas~$\hyperize(y_1 + y_2 = y_3)$ and $\hyperize(y_1 \cdot y_2 = y_3)$ with free variables~$\tr_{y_1}, \tr_{y_2}, \tr_{y_3}$ that are satisfied by $(\asg,\vars)$ if and only if $n_1 + n_2 = n_3$ and $n_1 \cdot n_2 = n_3$, respectively.

The case of addition is readily implementable in \hltlc by defining
\[\hyperize(y_1 + y_2 = y_3) = \C{{\tr_{y_1},\tr_{y_3}}}\F\left[\inprop_{\tr_{y_1}}\land\C{{\tr_{y_2},\tr_{y_3}}}\F (\inprop_{\tr_{y_2}}\land \inprop_{\tr_{y_3}})\right].\]
The first eventually updates the pointers of $\asg(\tr_{y_1})$ and $\asg(\tr_{y_3})$ by adding $n_1$ and the second eventually updates the pointers of $\asg(\tr_{y_2})$ and $\asg(\tr_{y_3})$ by adding $n_2$. At that position, $\inprop$ must hold on $\tr_{y_3}$, which implies that we have $n_1 + n_2 = n_3$.

At this point, it remains to implement multiplication, which is more involved than addition.
In fact, we need to consider four different cases.
If $n_1 = 0$ or $n_2=0$, then we must have $n_3 = 0$ as well. This is captured by the formula~$\psi_1 = (\inprop_{\tr_{y_1}} \vee \inprop_{\tr_{y_2}}) \wedge \inprop_{\tr_{y_3}}$.
    Further, if $n_1 = n_2 = 1$, then we must have $n_3 = 1$ as well. This is captured by the formula~$
    \psi_2 = \X (\inprop_{\tr_{y_1}} \wedge \inprop_{\tr_{y_2}} \wedge \inprop_{\tr_{y_3}})$.

Next, let us consider the case~$0 < n_1 \le n_2$ with $n_2 \ge 2$.
    Let $z \in \nats\setminus\set{0}$ be minimal with 
    \begin{equation}
    \label{eq}
        z \cdot (n_2 - 1) = z' \cdot n_2 - n_1
    \end{equation}
    for some $z' \in \nats\setminus\set{0}$.
    It is easy to check that $z = n_1$ is a solution of Equation~(\ref{eq}) for $z' = n_1$.
    Now, consider some $0 < z < n_1$ to prove that $n_1$ is the minimal solution: 
    Rearranging Equation~\ref{eq} yields
    $    - z + n_1 = (z' - z)\cdot n_2
    $, 
    i.e., $-z + n_1$ must be a multiple of $n_2$ (possibly $0$).
    But $0 < z < n_1$ implies $0 < n_1 - z < n_1 \le n_2$, i.e., $-z + n$ is not a multiple of $n_2$. Hence, $z = n_1$ is indeed the smallest solution of Equation~(\ref{eq}).  
    So, for the minimal such $z$ we have $z \cdot (n_2 -1) + n_1 = n_1 \cdot n_2$, i.e., we have expressed multiplication of $n_1$ and $n_2$.

    Let us show how to implement this argument in \hltlc. 
    We begin by constructing two periodic traces with periods $n_2$ and $n_2-1$ using contexts.
    Consider the formula
    \[
    \alpha_1 = 
    (\auxprop_{\traux}  \wedge \G\F \auxprop_{\traux}  \wedge \G\F \neg \auxprop_{\traux} \wedge \G\neg\inprop_\traux ) \wedge 
    (\auxprop_{\trauxp} \wedge \G\F \auxprop_{\trauxp} \wedge \G\F \neg \auxprop_{\trauxp} \wedge \G\neg\inprop_{\trauxp} )
    \]
    with free variables~$\traux$ and $\trauxp$.
    If $\asg$ is an assignment satisfying $\alpha_1$ such that the pointers of $\asg(\traux)$ and $\asg(\trauxp)$ are both $0$, then $\asg(\traux) = (\set{\auxprop}^{m_0}\emptyset^{m_1}\set{\auxprop}^{m_2}\emptyset^{m_3}\cdots,0)$ and $\asg(\trauxp) =(\set{\auxprop}^{m_0'}\emptyset^{m_1'}\set{\auxprop}^{m_2'}\emptyset^{m_3'}\cdots,0)$ for some $m_j, m_j' \ge 1$.
    Further, the formula
    $
    \alpha_2 = \G (\auxprop_{\traux} \leftrightarrow \auxprop_{\trauxp})
    $
    then expresses that $m_j = m_j'$ for all $j$, and the formula
    \[
    \alpha_3 = \C{{\traux}}\left( \auxprop_{\traux}\U\left( \neg \auxprop_{\traux} \wedge \C{{\traux,\trauxp}}\G (\auxprop_{\traux} \leftrightarrow \neg\auxprop_{\trauxp}) \right)  \right)
    \]
    expresses that $m_j' = m_{j+1}$ for all $j$: The until operator updates the pointers of $\asg(\traux)$ and $\trauxp$ to the positions marked \myquot{a} in Figure~\ref{figcontextsperiodic} and the always operator then compares all following positions as depicted by the diagonal arrows.
    Thus, we have $m_j = m_0$ for all $j$ if $\asg(\traux)$ and $\asg(\trauxp)$ satisfy $\alpha_\per = \alpha_1 \wedge \alpha_2 \wedge \alpha_3$.

    \begin{figure}
    \centering
        \begin{tikzpicture}[thick]

        \node at (-.5,1) {$\traux$};
        \node at (-.5,0) {$\trauxp$};
        \def\y{.4}
        \node[fill=gray!20, circle, minimum size =12,inner sep = 0] at (2,1+\y) {a};
        \node[fill=gray!20, circle, minimum size =12,inner sep = 0] at (0,-\y) {a};
        
        \draw[->, > = stealth] (0,1) -- (12,1);
        \draw[->, > = stealth] (0,0) -- (12,0);

        \foreach \i in {0,1,...,11}{
    \draw (\i,-.1) -- (\i, .1);
    \draw (\i,.9) -- (\i, 1.1);
  }

        \foreach \i in {0,1,2,6,7,8}{
\draw[fill,red] (\i,.1) circle (.03);
\draw[fill,red] (\i,-.1) circle (.03);

\draw[fill,red] (\i,.9) circle (.03);
\draw[fill,red] (\i,1.1) circle (.03);
}

\begin{scope}
    \clip(0,0) rectangle (12,1);
  \foreach \i in {0,1,...,11}{
    \draw[<->,> = stealth,thin] (\i, .15) -- (\i+2,.85);
  }
\end{scope}

        \end{tikzpicture}
        \caption{The formula~$\alpha_3$ ensures that the traces assigned to $\traux$ (and $\trauxp$) are periodic. Here, \myquot{\,\raisebox{-0.25ex}{\begin{tikzpicture}[thick]
\protect\draw(0,-.1) -- (0, .1);
            \protect\draw[fill,red] (0,.1) circle (.03);
\protect\draw[fill,red] (0,-.1) circle (.03);
        \end{tikzpicture}}\,} (\myquot{\,\raisebox{-0.25ex}{\begin{tikzpicture}[thick]
\protect\draw(0,-.1) -- (0, .1);
        \end{tikzpicture}}\,}) denotes a position where $\auxprop$ holds (does not hold).}
        \label{figcontextsperiodic}
        
    \end{figure}

    To conclude, consider the formula
    \begin{align*}
    \psi_3 = {}&{}\Big[\X\F( \inprop_{\tr_{y_1}} \wedge \F \inprop_{\tr_{y_2}}) \wedge \X\X\F\inprop_{\tr_{y_2}}\Big] \rightarrow\\
    {}&{}\quad\Big[
    \exists \traux_0. \exists \trauxp_0.\exists \traux_1. \exists \trauxp_1. \alpha_\per[\traux/\traux_0, \trauxp/\trauxp_0] \wedge \alpha_\per[\traux/\traux_1, \trauxp/\trauxp_1] \wedge \\
    {}&{}\quad\left(\auxprop_{\traux_0}\U (\neg \auxprop_{\traux_0} \wedge \inprop_{\tr_{y_2}}) \right) \wedge \left(\auxprop_{\traux_1}\U ( \neg \auxprop_{\traux_1} \wedge \X \inprop_{\tr_{y_2}}) \right)\wedge\\
    {}&{}\quad\C{\tr_{y_1},\tr_{y_3},\traux_0}\left( \F\left( \inprop_{\tr_{y_1}} \wedge \C{\tr_{y_3},\traux_0, \traux_1} (\neg \alpha_\blockchange)\U (\alpha_\blockchange \wedge \X \inprop_{\tr_{y_3}}) \right) \right)
    \Big]
    \end{align*}
    where $\alpha_\blockchange = (\auxprop_{\traux_0} \leftrightarrow \neg\X\auxprop_{\traux_0}) \wedge (\auxprop_{\traux_1} \leftrightarrow \neg\X\auxprop_{\traux_1})  $ holds at $\asg(\traux_0)$ and $\asg(\traux_1)$ if the pointers both point to the end of a block on the respective trace.
    Furthermore, $\alpha_\per[\traux/\traux_j, \trauxp/\trauxp_j]$ denotes the formula obtained from replacing each occurrence of $\traux$ by $\traux_j$ and every occurrence of $\trauxp$ by $\trauxp_j$.
    Thus, we quantify two traces~$\asg(\traux_0) = (\tra_0,0)$ and $\asg(\traux_1)= (\tra_1,0)$ (we disregard $\asg(\trauxp_0)$ and $\asg(\trauxp_1)$, as we just need them to construct the former two traces) that are periodic. 

        \begin{figure}
        \centering
        \begin{tikzpicture}[thick]

    \def\y{1}
        \node at (-.5,\y) {$\tr_0$};
        \node at (-.5,0) {$\tr_1$};
        \node at (-.5,-\y) {$\tr_{y_3}$};

        \draw[->, > = stealth] (0,\y) -- (13,\y);
        
        \draw[->, > = stealth] (0,0) -- (13,0);

        \draw[->, > = stealth] (0,-\y) -> (13,-\y);
        
        \foreach \i in {0,.5,...,3,7,7.5,...,10}{
        \draw[fill,red] (\i,\y+.1) circle (.03);
        \draw[fill,red] (\i,\y+-.1) circle (.03);
        }

        \foreach \i in {0,.5,...,2.5,6,6.5,...,8.5, 12,12.5}{
        \draw[fill,red] (\i,.1) circle (.03);
        \draw[fill,red] (\i,-.1) circle (.03);
        }

        \foreach \i in {0,.5,...,12.5}{
    \draw (\i,-.1) -- (\i, .1);
    \draw (\i,\y-.1) -- (\i, \y+.1);
    \draw (\i,-\y-.1) -- (\i, -\y+.1);
    \pgfmathtruncatemacro{\result}{2*\i}
        \node at (\i,-1.25) {\scriptsize \result};
  }

      \foreach \i in {0,.5,...,8.5}{
    \draw[<->,> = stealth,thin] (\i, .15) -- (\i+1.5,\y-.15);
  }

        \node[fill=gray!20, circle, minimum size =12,inner sep = 0] at (1.5,\y+.4) {\footnotesize a};
        \node[fill=gray!20, circle, minimum size =12,inner sep = 0] at (0,.4) {\footnotesize a};
        \node[fill=gray!20, circle, minimum size =12,inner sep = 0] at (1.5,-\y+0.4) {\footnotesize a};
        
        \node[fill=gray!20, circle, minimum size =12,inner sep = 0] at (8.5,.4) {\footnotesize b};
        \node[fill=gray!20, circle, minimum size =12,inner sep = 0] at (10,\y+.4) {\footnotesize b};
        \node[fill=gray!20, circle, minimum size =12,inner sep = 0] at (10,-\y+.4) {\footnotesize b};
        
        \node[fill=gray!20, circle, minimum size =12,inner sep = 0] at (10.5,-\y+.4) {\footnotesize c};

        \end{tikzpicture}
        \caption{The formula~$\psi_3$ implementing multiplication, for $n_1 = 3$ and $n_2 = 7$, i.e., $\tr_0$ has period~$7$ and $\tr_1$ has period~$6$. Here, \myquot{\,\raisebox{-0.25ex}{\begin{tikzpicture}[thick]
\protect\draw(0,-.1) -- (0, .1);
            \protect\draw[fill,red] (0,.1) circle (.03);
\protect\draw[fill,red] (0,-.1) circle (.03);
        \end{tikzpicture}}\,} (\myquot{\,\raisebox{-0.25ex}{\begin{tikzpicture}[thick]
\protect\draw(0,-.1) -- (0, .1);
        \end{tikzpicture}}\,}) denotes a position where $\auxprop$ holds (does not hold).}
        \label{figcontextsmult}
        
    \end{figure}
    
    The second line of $\psi_3$ is satisfied if $\tra_0$ has period~$n_2$ and $\tra_1$ has period $n_2-1$.
    Now, consider the last line of $\psi_3$ and see Figure~\ref{figcontextsmult}:
     %
        The first eventually-operator updates the pointers of $\asg(\tr_{y_1})$, $\asg(\tr_{y_3})$, and $\asg(\traux_0)$ to $n_1$, as this is the unique position on $\asg(\tr_{y_1})$ that satisfies~$\inprop$. Crucially, the pointer of $\asg(\traux_1)$ is not updated, as it is not in the current context. These positions are marked by \myquot{a}.

        Then, the until-operator compares positions~$i$ on $\tr_1$ and $i+n_1$ on $\tr_0$ (depicted by the diagonal lines) and thus subsequently updates the pointer of $\asg(\traux_1)$ to $x \cdot (n_2 -1) -1$ for the smallest $z \in\nats\setminus{0}$ such that $x \cdot (n_2 - 1) = z' \cdot n_2 - n_1$ (recall that the pointer of $\asg(\traux_0)$ with period~$n_2$ is already $n_1$ positions ahead) for some $z' \in \nats\setminus\set{0}$, as $\alpha_\blockchange$ only holds at the ends of the blocks of $\tr_0$ and $\tr_1$. 
        Accordingly, the until-operator updates the pointer of $\asg(\traux_0)$ to $z \cdot (n_2 -1) -1 + n_1$ and the pointer of $\asg(\tr_{y_3})$ to $z \cdot (n_2 -1) -1 + n_1$.
        These positions are marked by \myquot{b}.
        Then, the next-operator subsequently updates the pointer of $\asg(\tr_{y_3})$ to $z \cdot (n_2 -1) + n_1$, which is marked by \myquot{c} in Figure~\ref{figcontextsmult}.
    
    As argued above, $z$ must be equal $n_1$, i.e., the pointer of $\asg(\tr_{y_3})$ is then equal to $n_1 \cdot (n_2 -1) + n_1 = n_1 \cdot n_2$. At that position, $\psi_3$ requires that $\inprop$ holds on $\asg(\tr_{y_3})$. 
    Hence, we have indeed implemented multiplication of $n_1$ and $n_2$, provided we have $0 < n_1 \le n_2$ and $n_2 \ge 2$.

For the final case, i.e., $0 < n_2 < n_1$, we use a similar construction, but swap the roles of $y_1$ and $y_2$ in $\psi_3$, to obtain a formula~$\psi_4$.
Then, let $\hyperize(y_1 \cdot y_2 = y_3) = \psi_1 \vee\psi_2 \vee\psi_3 \vee\psi_4$ and let $\TS$ be a fixed transition system with 
$
\traces(\TS) = (\pow{\set{\inprop}})^\omega \cup (\pow{\set{\auxprop}})^\omega$,
Here, $(\pow{\set{\inprop}})^\omega$ contains the traces to mimic set quantification and $(\pow{\set{\auxprop}})^\omega$ contains the traces for $\traux_j$ and $\trauxp_j$ used to implement multiplication.
An induction shows that we have $\natsstruct \models \varphi$ if and only if $\TS \models \hyperize(\phi)$. 
Again, $\hyperize(\varphi)$ can be turned into a \hltlc formula (i.e., in prenex normal form), as no quantifier is under the scope of a temporal operator.
\end{proof}

Combining the results of this section, we obtain our main result settling the complexity of model-checking for \ghltl and its fragments \hltls and \hltlc.

\begin{theorem}
The model-checking problems for the logics~\ghltl, \hltls, and \hltlc are all equivalent to truth in second-order arithmetic.
\end{theorem}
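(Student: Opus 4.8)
The plan is to assemble the three preceding lemmata into matching upper and lower bounds for each of the three logics, thereby exhibiting reductions in both directions between each model-checking problem and truth in second-order arithmetic. Since the statement concerns three logics simultaneously, I would organize the argument by first dispatching all upper bounds in one stroke and then the lower bounds.

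For the upper bounds, I would invoke Lemma~\ref{lem:mcG}, which already gives a reduction from \ghltl model-checking to truth in second-order arithmetic. The key observation is that \hltls and \hltlc are \emph{syntactic} fragments of \ghltl, as spelled out in the Preliminaries: every \hltls (resp.\ \hltlc) sentence is literally a \ghltl sentence, and a transition system is the same object in all three settings. Hence any instance of \hltls or \hltlc model-checking is already an instance of \ghltl model-checking, so the reduction of Lemma~\ref{lem:mcG} applies without modification and simultaneously yields the upper bound for all three logics.

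For the lower bounds, Lemma~\ref{lem:mcS} and Lemma~\ref{lem:mcC} directly provide polynomial-time reductions from truth in second-order arithmetic to \hltls model-checking and to \hltlc model-checking, respectively, settling the two fragments. For full \ghltl, I would note that the pair~$(\TS, \hyperize(\phi))$ produced by, say, the reduction of Lemma~\ref{lem:mcS} is itself a legal \ghltl model-checking instance (again because \hltls is a fragment of \ghltl and $\TS \models \hyperize(\phi)$ means the same in both logics); thus the very same reduction also witnesses the \ghltl lower bound. Either of the two lemmata suffices here, so there is freedom in the choice.

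Combining the two directions, each of the three model-checking problems is reducible to truth in second-order arithmetic and conversely admits a reduction from it, so all three are equivalent to truth in second-order arithmetic. At the level of this theorem there is no genuine obstacle: the difficulty is entirely absorbed into the lemmata, in particular the encodings of multiplication using only stuttering (Lemma~\ref{lem:mcS}) and only contexts (Lemma~\ref{lem:mcC}). The one point worth verifying explicitly is that the fragment containments are purely syntactic, so that both the upper-bound reduction and the lower-bound reductions transfer between the logics verbatim, with no need to re-translate formulas or transition systems.
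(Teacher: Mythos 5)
Your proposal is correct and matches the paper's argument: the theorem is proved exactly by combining Lemma~\ref{lem:mcG} (upper bound for \ghltl, which transfers to \hltls and \hltlc as syntactic fragments) with the lower bounds of Lemma~\ref{lem:mcS} and Lemma~\ref{lem:mcC}, which also cover full \ghltl since the fragment instances are \ghltl instances verbatim. Nothing further is needed.
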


\section{Conclusion}

In this work, we have settled the complexity of \ghltl, an expressive logic for the specification of asynchronous hyperproperties.
Although it is obtained by adding stuttering, contexts, and trace quantification under the scope of temporal operators to \hyltl, we have proven that its satisfiability problem is as hard as that of its (much weaker) fragment \hyltl. 
On the other hand, model-checking \ghltl is much harder than for \hyltl, i.e., equivalent to truth in second-order arithmetic vs.\ decidable. 
Here, the lower bounds again hold for simpler fragments, i.e., \hltls and \hltlc. 

Our work extends a line of work that has settled the complexity of synchronous hyperlogics like \hyltl~\cite{hyperltlsat}, \hyqptl~\cite{hyq}, and second-order \hyltl~\cite{fz25}.
In future work, we aim to resolve the exact complexity of other logics for asynchronous hyperproperties proposed in the literature. 

\bibliographystyle{eptcs}
\bibliography{bib}


\end{document}